\newcommand{\height}{\mathit{height}}
\newcommand{\lpf}{\mathrm{lpf}}
\newcommand{\leftmostOcc}{\mathrm{lmocc}}
\newcommand{\rlgOpt}{\hat{g}_{\mathrm{rl}}}
\newcommand{\lzhbOpt}[1]{\hat{z}_{\mathit{HB}(#1)}}
\newcommand{\lzhbOptn}[1]{\hat{\tilde{z}}_{\mathit{HB}(#1)}}
\newcommand{\LZHB}[1]{\textsf{LZHB#1}}
\newcommand{\greedyBATLZ}{\textsf{greedy-BATLZ}}
\newcommand{\greedierBATLZ}{\textsf{greedier-BATLZ}}
\newcommand{\ignore}[1]{}
\newcommand*{\myblock}[1]{\subparagraph{#1.}}
\title{Height-bounded Lempel-Ziv encodings}
\author{Hideo Bannai}{M\&D Data Science Center, Tokyo Medical and Dental University (TMDU), Japan}{hdbn.dsc@tmd.ac.jp}{https://orcid.org/0000-0002-6856-5185}{JSPS KAKENHI Grant Numbers JP20H04141, JP24K02899.}
\author{Mitsuru Funakoshi}{NTT Communication Science Laboratories, Japan}{mitsuru.funakoshi@ntt.com}{https://orcid.org/0000-0002-2547-1509}{}
\author{Diptarama Hendrian}{M\&D Data Science Center, Tokyo Medical and Dental University (TMDU), Japan}{diptarama.hendrian@tmd.ac.jp}{https://orcid.org/0000-0002-8168-7312}{}
\author{Myuji Matsuda}{Graduate School of Medical and Dental Sciences, Tokyo Medical and Dental University (TMDU), Japan}{ma190093@tmd.ac.jp}{}{}
\author{Simon J. Puglisi}{Department of Computer Science, University of Helsinki, Helsinki, Finland}{puglisi@cs.helsinki.fi}{https://orcid.org/0000-0001-7668-7636}{Academy of Finland grants 339070 and 351150.}
\authorrunning{H. Bannai, M. Funakoshi, D. Hendrian, M. Matsuda, S.\,J. Puglisi} 
\keywords{Lempel-Ziv parsing, data compression} 
\begin{document}
\maketitle
\begin{abstract}
    We introduce
    height-bounded LZ encodings (LZHB), a new family of compressed representations that are variants of Lempel-Ziv parsings
    with a focus on bounding the worst-case access time to arbitrary positions in the text directly via the compressed representation.
    An LZ-like encoding is a partitioning of the string into phrases of length $1$ which can be encoded literally,
    or phrases of length at least $2$ which have a previous occurrence in the string and can be encoded by its position and length.
    An LZ-like encoding induces an implicit referencing forest on the set of positions of the string.
    An LZHB encoding is an LZ-like encoding where the height of the implicit referencing forest is bounded.
    An LZHB encoding with height constraint $h$ allows access to an arbitrary
    position of the underlying text using $O(h)$ predecessor queries.

    While computing the optimal (i.e., smallest) LZHB encoding efficiently seems to be difficult [Cicalese \& Ugazio 2024, arxiv],
    we give the first linear time algorithm for strings over a constant size alphabet
    that computes the greedy LZHB encoding, i.e., the string is processed from beginning to end,
    and the longest prefix of the remaining string that can satisfy the height constraint is taken as the next phrase.
    Our algorithms significantly improve both theoretically and practically,
    the very recently and independently proposed algorithms by Lipt\'ak et al. (arxiv, to appear at CPM 2024).

    We also analyze the size of height bounded LZ encodings in the context of repetitiveness measures,
    and show that there exists a constant $c$ such that the size $\lzhbOpt{c\log n}$ of the optimal
    LZHB encoding whose height is bounded by $c\log n$ for any string of length $n$
    is $O(\rlgOpt)$,
    where $\rlgOpt$ is the size of the smallest run-length grammar.
    Furthermore,
    we show that there exists a family of strings such that $\lzhbOpt{c\log n} = o(\rlgOpt)$,
    thus making $\lzhbOpt{c\log n}$ one of the smallest known repetitiveness measures for which $O(\polylog n)$ time access is possible
    using linear ($O(\lzhbOpt{c\log n})$) space.
\end{abstract}
\clearpage
\setcounter{page}{1}

\section{Introduction}\label{sec:introduction}
Dictionary compressors are a family of algorithms that produce compressed representations of input strings essentially as sequences of copy and paste operations.
These representations are widely used in general compression tools such as gzip and LZ4, and have also received much recent attention since they are especially effective for {\em highly repetitive} data sets, such as versioned documents and pangenomes~\cite{DBLP:journals/csur/Navarro21a}.

A desirable operation to support on compressed data is that of {\em random access} to arbitrary positions in the original data. Access should be supported without decompressing the data in its entirety, and ideally by decompressing little else than the sought positions. Recent years have seen intense research on the access problem in the context of dictionary compression, most notably for grammar compressors (SLPs) and Lempel-Ziv (LZ)-like schemes. The general approach is to impose structure on the output of the compressor and in doing so add small amounts of information to support fast queries.
While LZ77~\cite{DBLP:journals/tit/ZivL77}
is known to be theoretically and practically one of the smallest compressed representations
that can be computed efficiently,
a long-standing question is whether $O(\polylog(n))$ time access can be achieved with a data structure using $O(z)$ space~\cite{DBLP:journals/csur/Navarro21a}, where $z$ is the size of the LZ77 parsing.

In an LZ-like compression scheme, the input string $T$ of length $n$ is {\em parsed} into $z' \le n$ {\em phrases} (substrings of $T$),
where each phrase is of length $1$, or, is a phrase of length at least $2$ which has a previous occurrence.
Each phrase can be encoded by a pair $(\ell,s)$, where $\ell \geq 1$ is the length of the phrase,
and $s$ is the symbol representing the phrase if $\ell = 1$, or otherwise, a position of a previous occurrence (source) of the phrase.
A greedy left-to-right algorithm that takes the longest prefix of the remaining string that can be a phrase leads to the aforementioned LZ77 parsing.
An LZ-like encoding of a string induces a {\em referencing forest}, where each position of the
string is a node, phrases of length~1 are roots of a tree in the forest, and the parent of all
other positions are induced by the previous occurrence of the phrase it is contained in.
The principle hurdle to support fast access to an arbitrary symbol $T[i]$ from an LZ-like parsing is to trace the symbol to a root in the referencing forest. For LZ77, the height of the referencing forest can be $\Theta(n)$.

\myblock{Contributions} In this paper, we explore LZ-like parsings specifically designed to bound the height of the referencing forest, i.e., LZ-like parsers in which the parsing rules prevent the introduction of phrases that would exceed a specified maximum height of the referencing forest.
Our contributions are summarized as follows:
\begin{enumerate}
    \item We propose the first linear time\footnote{Assuming a constant-size alphabet.} algorithm $\textsf{LZHB}$ for computing the greedy height-bounded LZ-like encoding,
          i.e.,
          the string is processed from beginning to end, and
          each phrase is greedily taken as the longest prefix of the remaining string that can satisfy the height constraint.
          This problem was first considered by Kreft and Navarro~\cite{DBLP:conf/dcc/KreftN10},
          where an algorithm called \textsf{LZ-COST}, with no efficient implementation, is mentioned.
          Very recently, contemporaneously and independently of our work,
          this problem was also revisited by Lipt\'ak et al.~\cite{liptak2024batlz},
          who presented an algorithm called \greedyBATLZ{} running in $O(n\log^3 n)$ time.
          Lipt\'ak et al. also proposed \greedierBATLZ{}
          that runs in $O(z'n^2\log n) = O(n^3\log n)$ time, where $z'$ is the size of the parsing,
          further adding the requirement that the previous occurrence of the phrase is chosen so as to minimize the maximum height.
          We show that our algorithms can be modified to support this heuristic in
          $O(n\log \sigma + \mathit{occ}) = O(n\log \sigma + z'n)= O(n^2)$ time, where $\mathit{occ}$
          is the total number of previous occurrences of all phrases.
    \item We show that our algorithms allow for simple, lightweight implementations
          based on suffix arrays and segment trees,
          and that our implementations are an order of magnitude (or two) faster than the recent implementations of
          related schemes by Lipt\'ak et al.
    \item We propose a new LZ-like encoding, which may be of independent theoretical interest, which can be considered as a run-length variant of standard LZ-like encodings which can potentially reduce the referencing height further.
    \item We analyze the size of height bounded LZ-like encodings in the context of repetitiveness measures~\cite{DBLP:journals/csur/Navarro21a}.
          We show that for some constant $c$, there exits data structures
          of $O(\lzhbOpt{c\log n})$ size that allow access in $O(\polylog{n})$ time,
          where $\lzhbOpt{c\log n}$ is the size of the smallest LZ-like encoding whose height is bounded by $c\log n)$ .
          Furthermore $\lzhbOpt{c\log n}$ is always $O(\rlgOpt)$, and further can be $o(\rlgOpt)$ for some family of strings,
          where $\rlgOpt$ denotes the size of the smallest run-length grammar (RLSLP)~\cite{NIIBT16} representing the string.
          This makes $\lzhbOpt{c\log n}$ one of the smallest known measures that can achieve $O(\polylog n)$ time access using linear space.
          The other two are the size $z_e$ of the LZ-End parse~\cite{DBLP:conf/dcc/KreftN10,DBLP:journals/tcs/KreftN13,DBLP:conf/soda/KempaS22},
          and the size $\hat{g}_{it(d)}$ of the smallest iterated SLPs (ISLPs)~\cite{NavarroUrbinaLATIN2024}.
          Note that while there exist string families such that $\lzhbOpt{c\log n} = o(z_e)$, we do not yet know if it can be that $z_e=o(\lzhbOpt{c\log n})$.
\end{enumerate}

\section{Preliminaries}
\label{sec:prelims}
\subsection{Strings}\label{sec:prelims:strings}
Let $\Sigma$ be a set of symbols called the {\em alphabet},
and $\Sigma^*$ the set of strings over $\Sigma$.
For any non-negative integer $n$, $\Sigma^n$ is the set of strings of length $n$.
For any string $x\in\Sigma^*$, $|x|$ denotes the length of $x$.
The empty string (the string of length $0$) is denoted by $\varepsilon$.
For any integer $i\in [1,|x|]$, $x[i]$ is the $i$th symbol of $x$,
and for any integers $i,j\in[1,|x|]$,
$x[i..j] = x[i]\cdots x[j]$ if $i \leq j$, and $\varepsilon$ otherwise.
We will write $x[i..j)$ to denote $x[i..j-1]$.
For string $w$, $p\in [1,|w|]$ is a {\em period} of $w$ if
$w[i] = w[i+p]$ for all $i\in [1,|w|-p]$.
For any string $x$, $i\in [1,|w|-|x|+1]$ is an {\em occurrence} of $x$ in $w$
if $w[i..i+|x|) = x$.
For any position $i>1$, the length $\lpf_w(i)$ of the longest previously occurring factor of position $i$ is
$\lpf_w(i) = \max \{ l \mid w[i'..i'+l) = w[i..i+l), i' < i\}$, and let $\lpf_w(1)=0$.
For any $i,\ell$ with $1 \leq i \leq i+\ell \leq |w|$, the leftmost occurrence of the length-$\ell$ substring starting at $i$ is
$\leftmostOcc_w(i,\ell) = \min \{ j < i \mid w[j..j+\ell) = w[i..i+\ell) \}$, which can be undefined when $i$ is the leftmost occurrence of $w[i..i+\ell)$.
We will omit the subscript when the underlying string considered is clear.

A string that is both a prefix and suffix of a string is called a {\em border} of that string.
The {\em border array}
$B$ of a string $w$ is an array $B[1..|w|]$ of integers,
where the $B[i]$ is the length of the longest proper border of $w[1..i]$.
The border array is computable in $O(|w|)$ time and space~\cite{DBLP:journals/siamcomp/KnuthMP77},
in an on-line fashion, i.e., at each step $i = 1, \ldots, |w|$, the border array of
$B[1..i]$ is obtained in amortized constant ($O(i)$ total) time.
Notice that the minimum period of $w[1..i]$ is $i-B[i]$.
Thus, the minimum periods of all prefixes of a (possibly growing) string can be computed in time linear in the length of the string.

\begin{lemma}[\cite{DBLP:journals/siamcomp/KnuthMP77}]\label{lem:borderarray}
    The minimum period of a semi-dynamic string that allows symbols to be appended
    is non-decreasing
    and computable in amortized constant time per symbol.
\end{lemma}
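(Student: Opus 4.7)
The plan is to prove the two assertions separately, both relying on the relationship $\period(w[1..i]) = i - B[i]$, where $B$ is the border array of $w$. This identity is immediate: a proper border of length $B[i]$ of $w[1..i]$ corresponds exactly to a period of length $i - B[i]$, and maximizing the border length minimizes the period.

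For the non-decreasing claim, I would argue as follows. Let $p = \period(w[1..i])$ and suppose for contradiction that $w[1..i+1]$ has some period $q < p$. By definition of period, $w[j] = w[j+q]$ for every $j \in [1, i+1-q]$. Since $q < p \leq i$, all of these indices satisfy $j + q \leq i$, so the same equalities exhibit $q$ as a period of $w[1..i]$, contradicting the minimality of $p$. Thus $\period(w[1..i+1]) \geq \period(w[1..i])$.

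For the amortized running time, I would invoke the classical incremental KMP computation of $B$. Given $B[1..i]$ and the newly appended symbol $w[i+1]$, the algorithm starts from $k = B[i]$ and repeatedly replaces $k$ by $B[k]$ until either $w[k+1] = w[i+1]$ (in which case $B[i+1] = k+1$) or $k = 0$ (in which case $B[i+1]$ is $0$ or $1$ depending on the first symbol match). Each iteration of the inner loop strictly decreases $k$, while the entire update increases the running potential $B[\cdot]$ by at most $1$. Using $B[i]$ as the potential function, the amortized cost per appended symbol is $O(1)$. Since $\period(w[1..i+1]) = (i+1) - B[i+1]$ is then obtained in constant additional time, the claim follows.

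The argument is essentially textbook; the only point requiring a hair of care is checking that the non-decreasing property is actually a statement about periods rather than borders (the border array itself is famously \emph{not} monotone), and that the range of $j$ in the period inequality truly lies inside $w[1..i]$, which is where the strict inequality $q < p$ is used.
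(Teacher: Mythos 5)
Your proof is correct and matches the paper's approach: the paper also reduces both claims to the KMP border array via the identity $\period(w[1..i]) = i - B[i]$ and the standard amortized (potential-function) analysis of the online border-array computation---the lemma is simply cited there, so your explicit monotonicity argument fills in a detail the paper leaves implicit. One small slip: in the monotonicity step the indices $j\in[1,i+1-q]$ do not all satisfy $j+q\le i$ (take $j=i+1-q$); you only need, and only have to verify, the subrange $j\in[1,i-q]$, and those equalities are exactly what exhibit $q$ as a period of $w[1..i]$.
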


The following is another useful lemma, rediscovered many times in the literature.
\begin{lemma}[e.g. Lemma 8 of~\cite{PR98}]\label{lem:internal_occ}
    The set of occurrences of a word $w$ inside a word $v$
    which is exactly twice longer than $w$ forms a single arithmetic progression.
\end{lemma}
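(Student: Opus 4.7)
The plan is to prove the slightly more general statement that, for any word $v$ of length at most $2|w|$, the occurrences of $w$ in $v$ form a single arithmetic progression; the lemma is then the special case $|v|=2|w|$. Let $i_1 < i_2 < \cdots < i_k$ enumerate these occurrences; the cases $k\le 2$ are trivial, so assume $k\ge 3$. The pivotal observation is that each consecutive gap $d_j := i_{j+1}-i_j$ satisfies $d_j \le i_k - i_1 \le |v|-|w| \le |w|$, so the two occurrences at $i_j$ and $i_{j+1}$ overlap along $|w|-d_j$ characters of $w$, whence $d_j$ is a period of $w$.

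I would first settle $k=3$ directly. Because $d_1 + d_2 = i_3 - i_1 \le |w|$, Fine and Wilf's periodicity lemma applied to the two periods $d_1, d_2$ of $w$ gives that $g := \gcd(d_1, d_2)$ is also a period of $w$. The key claim is that $v$ itself has period $g$ on the whole interval $[i_1, i_3+|w|-1]$. To establish this, within each of the three planted copies $v[i_j..i_j+|w|-1]=w$ the period $g$ of $w$ gives $v[i]=v[i+g]$ for any $i,i+g$ inside the copy, and at the joins between consecutive copies one exploits the divisibilities $g \mid d_1$ and $g \mid d_2$ to check that shifting by $g$ from one copy lands consistently inside the next via iterated application of the $w$-period. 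A period $g$ on $[i_1, i_3+|w|-1]$ then forces $w$ to occur at every one of $i_1, i_1+g, i_1+2g, \ldots$ in that range; if $g < d_1$ this yields an occurrence of $w$ strictly between $i_1$ and $i_2$, and if $g=d_1<d_2$ it yields one strictly between $i_2$ and $i_3$, each contradicting the enumeration. Hence $d_1 = d_2$.

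For $k \ge 4$ I would induct: apply the general statement to the two sub-windows $v[1..i_{k-1}+|w|-1]$ and $v[i_2..i_k+|w|-1]$, each of length at most $2|w|$ and each containing exactly $k-1$ occurrences of $w$. The induction gives $d_1 = \cdots = d_{k-2}$ from the first window and $d_2 = \cdots = d_{k-1}$ from the second; the two chains overlap in $d_2$ as soon as $k\ge 4$, so all consecutive gaps coincide, giving the desired arithmetic progression.

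The main obstacle is the $k=3$ step. Fine-Wilf only supplies a small common period $g$ of the \emph{pattern} $w$, whereas what is actually needed to produce the contradicting occurrence is periodicity of $v$ itself on the full interval $[i_1, i_3+|w|-1]$. Checking that the three planted copies of $w$ at $i_1, i_2, i_3$ agree with one another under a shift by $g$ on the parts of $v$ not jointly covered by a single copy is the bookkeeping-heavy part; once that inheritance of the period by $v$ is in place, the forbidden extra occurrence falls out, and the remainder of the argument (both base case and induction) is routine.
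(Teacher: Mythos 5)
Your proof is correct, but note that the paper never proves this lemma itself --- it is imported as a known result (Lemma~8 of~\cite{PR98}), so there is no in-paper argument to compare against; what you have written is essentially the standard first-principles derivation of this classical fact from the Fine--Wilf periodicity lemma. The steps all check out: each gap $d_j$ is a period of $w$ because $d_j \le |v|-|w| \le |w|$; in the $k=3$ case the bound $d_1+d_2 = i_3-i_1 \le |w|$ licenses Fine--Wilf for the pair $(d_1,d_2)$; and the step you flag as bookkeeping-heavy --- the inheritance of the period $g$ by $v$ on all of $[i_1,\,i_3+|w|-1]$ --- does go through, most cleanly via the observation that consecutive copies of $w$ overlap in $|w|-d_1 \ge d_2 \ge g$ and $|w|-d_2 \ge d_1 \ge g$ positions respectively, and two $g$-periodic strings overlapping in at least $g$ positions have a $g$-periodic union (this is tidier than tracking divisibilities at the joins, though your route works too). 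The concluding case split is exhaustive since $g=\gcd(d_1,d_2)\le d_1$, and the windowed induction for $k\ge 4$ is sound: each window has length less than $2|w|$, contains exactly $k-1$ of the occurrences, and the two resulting chains of equal gaps share $d_2$ once $k\ge 4$. The only cosmetic caveat is that several steps are sketched rather than written out, but nothing essential is missing.
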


Our parsing algorithms make use of the {\em suffix tree} data structure~\cite{Weiner73}. We assume some familiarity with suffix trees,
and give only the basic properties essential to our methods below. We refer the reader to the many textbook treatments of suffix trees for further details~\cite{gusfield_1997,Navarro-book}.

\begin{enumerate}
    \item\label{stprop:def} The suffix tree of a string $T$, denoted $\mathcal{T}_T$ (or just $\mathcal{T}$ when the context is clear) is a compacted trie containing all the suffixes of $T$.
          Each leaf of the suffix tree corresponds to a suffix of the string and is labelled with the starting position of that suffix.
    \item\label{stprop:construction}
          The suffix tree can be constructed in an online fashion via Ukkonen's algorithm~\cite{DBLP:journals/algorithmica/Ukkonen95},
          in $O(|T|\log\sigma)$ total time,
          i.e., at each step $i=1,\ldots, |T|$, the
          suffix tree of $T[1..i]$ is obtained in amortized $O(\log\sigma)$ ($O(i\log\sigma)$ total) time.
          For linearly-sortable alphabets, it is possible to compute the suffix tree offline in $O(|T|)$ time~\cite{DBLP:conf/focs/Farach97}.
    \item\label{stprop:prefixquery}
          Suffix trees can support {\em prefix queries} that return a pair $(\ell, s)$, where $\ell$ is the length of the longest prefix of the query string that occurs in the indexed string, and $s$ is the leftmost position of an occurrence of that prefix.
          Prefix queries can be conducted simply by traversing the suffix tree from the root with the query string
          and take $O(\ell\log \sigma)$ time.
          The query string can be processed left-to-right, where each symbol is processed in $O(\log\sigma)$ time, which is the cost of finding the correct out-going child edge
          from the current suffix tree node.
          Prefix queries can be answered in the same time complexity even if the underlying string $T$ is extended in an online fashion as mentioned in Property~\ref{stprop:construction}.
    \item\label{stprop:leftmostocc_of_factor}
          The suffix tree of $T$ can be preprocessed in linear time to answer, given $i,\ell$, the leftmost occurrence in $T$ of the substring $T[i..i+\ell)$,
          i.e.,
          $\leftmostOcc_T(i,\ell)$
          in constant time~\cite{DBLP:conf/cpm/BelazzouguiKPR21}.
\end{enumerate}

\subsection{LZ encodings and random access}
An LZ-like parsing of a string is a decomposition of the string
into phrases of length $1$ (literal phrases),
or a phrase of length at least $2$ which has a previous occurrence in the string.
An LZ-like {\em encoding} is a representation of the LZ-like parsing,
where literal phrases are encoded as the pair
$(1,c)$, where $c\in\Sigma$ is the phrase itself, and phrases of length at least $2$ are
encoded as the pair $(\ell,s)$, where $\ell \geq 2$ is the length of the phrase and
$s$ is a previous occurrence (or the {\em source}) of the phrase.
Although LZ parsing and encoding are sometimes used as synonyms,
we differentiate them in that
a parsing only specifies the length of each phrase, while
an encoding specifies the previous occurrence of each phrase as well.
The {\em size} of an LZ-like encoding is the number of phrases.
For example, for the string
$\mathtt{ababacbabac}$,
we can have an encoding
$(1,\mathtt{a}),(1,\mathtt{b}),(3,1),(1,\mathtt{c}),(5,2)$
of size $5$.
A common variant of LZ-like encodings adds an extra symbol explicitly to the phrase,
and each phrase is encoded by a triplet.
For simplicity, our description will not consider this extra symbol;
the required modifications for the algorithms to include this are straightforward.
We note that our experiments in Section~\ref{sec:experiments} will include the extra character,
in order to compare with implementations of previous work.

\begin{wrapfigure}[11]{r}{0.5\textwidth}
    \centering
    \includegraphics[width=0.5\textwidth]{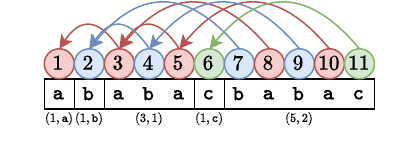}
    \caption{An example of an implicit referencing forest induced from the LZ-like encoding $(1,\mathtt{a}),(1,\mathtt{b}),(3,1),(1,\mathtt{c}),(5,2)$ for the string $\mathtt{ababacbabac}$.}
    \label{fig:example:ref_forest}
\end{wrapfigure}
An LZ-like encoding of a string induces an implicit referencing forest, where:
each position of the string is a node, literal phrases are roots of a tree in the forest,
and the parent of all other positions are induced by the source of the phrase it is contained in.
For example, for the encoding of the string $\mathtt{ababacbabac}$ as above,
the referencing forest is shown in Figure~\ref{fig:example:ref_forest}.
Let $\mathcal{E} = (\ell_1, s_1), \ldots, (\ell_{z'}, s_{z'})$ be an LZ-like encoding of string $T$.
Given an arbitrary position~$i$, suppose we would like to retrieve the symbol $T[i]$.
This can be done by traversing the implicit referencing forest using predecessor queries.
We first find the phrase $(\ell_j, s_j)$ with starting position $b_j$\footnote{We note that it is possible to encode each phrase as $(b_j,s_j)$, since then, $\ell_j = b_{j+1}-b_j$.}
that the position $i$ is contained in, i.e.,
$j$ s.t. $b_j \leq i < b_j + \ell_j$.
Then, we can deduce that the parent position of $i$ is $i' = s_j + (i - b_j)$.
This is repeated until a literal phrase $(1, c)$ is reached, in which case $T[i] = c$.
The number of times a parent must be traversed (i.e., the number of predecessor queries) is bounded by the
    {\em height} of the referencing forest.
This notion of height for LZ-like encodings was introduced by Kreft and Navarro~\cite{DBLP:journals/tcs/KreftN13}.

\section{Height bounded LZ-like encodings}
\label{sec:lzhb}
We first consider modifying the definition of the height of the referencing forest under some conditions, by implicitly rerouting edges.
Consider the case of a unary string $T=\mathtt{a}^n$ and LZ-like encoding $(1,\mathtt{a}), (n-1,1)$. The straightforward definition above gives a height of $n-1$ for position $n$.
This is a result of the second phrase being self-referencing,
i.e., the phrase overlaps with its referenced occurrence,
and each position references its preceding position.
An important observation is that self-referencing phrases are periodic; in particular a self-referencing phrase starting at position $b_i$ with source $s_i$ has period $b_i-s_i$.
Due to this periodicity, any position in the phrase can refer to an appropriate position in $T[s_i..b_i)$.

More formally, let $\mathcal{E} = (\ell_1, s_1), \ldots, (\ell_{z'}, s_{z'})$ be an LZ-like encoding
of $T$. For any position $i$,
let $b_j$ be the starting position of the phrase $(\ell_j,s_j)$
such that $b_j \leq i < b_j+\ell_j$.
Then, $T[i] = s_j$ if $\ell_j = 0$, and $T[i] = T[s_j + (i - b_j)] = T[s_j+((i-b_j)\bmod (b_j-s_j))]$ otherwise.

We define the height $\height_\mathcal{E}(i)$ of position $i$ by an encoding $\mathcal{E}$ as
\begin{equation}
    \height_\mathcal{E}(i) = \begin{cases}
        0                                                    & \mbox{if $\ell_j = 0$} \\
        \height_\mathcal{E}(s_j+((i-b_j)\bmod (b_j-s_j))) +1 & \mbox{otherwise.}
    \end{cases}\label{eq:height:period}
\end{equation}
The subscript will be omitted if the underlying encoding considered is clear.
Since $j$ can be computed using a predecessor query on the set of phrase starting positions,
$T[i]$ can be computed using $O(\height(i)Q(z))$ time using a data structure of size $O(z)$,
where $Q(z)$ is the time required for predecessor queries on $z$ elements in $[1,n]$ using $O(z)$ space.
$Q(z)$ is $O(\log n)$ using a simple binary search, and faster using more sophisticated methods~\cite{NRacmcs20}.

For example, for the encoding
$(1,\mathtt{a}),(1,\mathtt{a}),(1,\mathtt{b}),(3,2),(1,\mathtt{c}),(4,3)$
of string $\mathtt{aababacbaba}$,
the heights are:
$0,0,0,1,1,1,0,1,2,2,2$.
Notice that the phrase $(3,2)$ that starts at position $4$, representing $\mathtt{aba}$ is self-referencing, and the height of the last position in the phrase (position 6) is $1$,
since it is defined to reference position
$2 = 2 + (6-4)\bmod(6-4)$.
We note that even with this modified definition, the height of the optimal LZ-like encoding (LZ77) can be $\Theta(n)$.
See \cref{observation:linear-height} in~\cref{appendix:claims}.

In order to bound the worstcase query time complexity for access operations, we consider LZ-like encodings with bounded height.
An {\em $h$-bounded LZ-like encoding} is an LZ-like encoding where $\max \{ \height(i) \mid i\in [1,n]\} \leq h$.

There are many ways one could enforce such a height restriction. Unfortunately,
finding the smallest such encoding was very recently shown to be NP-hard by Cicalese and Ugazio~\cite{cicalese2024complexity}.
Therefore, we propose several greedy heuristics to compute $h$-bounded LZ-like encodings.
Given an encoding for $T[1..b_j)$ (which defines $\height(i)$ for any $i\in [1,b_j)$),
the next phrase $(\ell_j, s_j)$ starting at position $b_j$ is defined as follows.
\begin{description}
    \item [\LZHB{1}]
          $\ell_j$ is the largest value such that
          $T[s_j..s_j+\ell_j)$ satisfies the height constraint,
          where $s_j = \leftmostOcc_T(b_j,\lpf_T(b_j))$, i.e.,
          the leftmost occurrence of the longest previously occurring factor at position $b_j$.

    \item [\LZHB{2}]
          $\ell_j$ is the largest value such that for all $1 \leq \ell' \leq \ell_j$, the leftmost occurrence of $T[b_j..b_j+\ell')$ satisfies the height constraint. $s_j$ is the leftmost occurrence of $T[b_j..b_j+\ell_j)$.

    \item [\LZHB{3}]
          $\ell_j$ is the largest value such that for all $1 \leq \ell' \leq \ell_j$, there exists some previous occurrence of $T[b_j..b_j+\ell')$ that satisfies the height constraint.
          $s_j$ is the leftmost occurrence of $T[b_j..b_j+\ell_j)$ that satisfies the height constraint.
\end{description}
Note that for all variations, $\ell_j=1$ if
there is no occurrence of $T[b_j]$ that satisfies the corresponding conditions described above, in which case, $s_j = T[b_j]$.

We note that
\LZHB{1} corresponds to the baseline algorithm \textsf{BATLZ2} in~\cite{liptak2024batlz}.
\LZHB{3} essentially corresponds to \greedyBATLZ{} in~\cite{liptak2024batlz} as well,
but \greedyBATLZ{} does not require that the occurrence is leftmost.
The difference between \LZHB{2} and \LZHB{3} lies in the priority of the choice of the leftmost occurrence and when to check the height constraint.
\LZHB{2} greedily extends the prefix, checking each time whether its leftmost occurrence satisfies the height constraint.
\LZHB{3} finds the leftmost occurrence out of the longest prefix that can satisfy the height constraint.
Note that when the height is unbounded,
the size of all three variants will be equivalent to the regular LZ77 parsing.

We show that \LZHB{1} and \LZHB{2} can be computed in $O(n)$ time and space for linearly-sortable alphabets,
and \LZHB{3} can be computed online
in $O(n\log\sigma)$ time and $O(n)$ space for general ordered alphabets.

\medskip

We also propose a new encoding of LZ-like parsings that re-routes edges of the implicit referencing forest in an attempt to further reduce the heights, again using periodicity.
In the case of self-referencing phrases, our definition of height in Equation~(\ref{eq:height:period})
utilized the fact that self-referencing phrases implies a period $b_j-s_j$ in the phrase.
Using this period, we could re-route the parents of all positions inside the phrase to the first period in the previous occurrence of the phrase,
which is outside the phrase.
Here, we make two further observations:
1) since the referenced substring is the first period of the phrase,
we could extended the phrase for free while the period continues,
possibly making the phrase longer,
and
2) the referenced substring could have a previous occurrence further to the left,
which should tend to have shorter heights.
Therefore, if we were to store the period of the phrase explicitly, this could be applied to and benefit all periodic phrases, self-referencing or not.

Specifically then, under this scheme, a phrase with period $1$ can be encoded as the triple $(\ell,c,1)$, where $c$ is a symbol, and a phrase with period $p \geq 2$ can be encoded as the triple $(\ell, s, p)$,
where $\ell \geq p$ is the length of the phrase with period $p \geq 2$, and $s$ is a previous occurrence of the length-$p$ prefix of the phrase.
We will call such an encoding, a {\em modified}
LZ-like encoding.
For example,
$(2,\mathtt{a},1),(1,\mathtt{b},1),
    (3,2,2),(1,\mathtt{c},1),(4,3,2)$
would be a modified LZ-like encoding
for the string $\mathtt{aababacbaba}$.

The implicit referencing forest and heights of the modified LZ-like encoding can be defined analogously:
a position $i$ in the $j$th phrase $(\ell_j,s_j,p_j)$ that begins at position $b_j$, i.e., $b_j \leq i < b_j+\ell_j$,
will reference position $s_j+((i-b_j)\bmod p_j)\bmod (b_j - s_j)$, where the second $\bmod$ is to deal with the
case where the occurrence of the prefix period is self-referencing.
For the above example, the heights are:
$0,0,0,1,1,1,0,1,2,1,2$.

A greedy left-to-right algorithm computes the $j$th phrase
$(\ell_j, s_j, p_j)$ starting at $b_j$ as:
\begin{description}
    \item [\LZHB{4}]
          $\ell_j$ is the largest value such that
          $T[b_j..b_j+\ell_j)$ has period $1$, or,
          for all $1 \leq \ell' \leq \ell_j$,
          there exists some previous occurrence of $T[b_j..b_j+p')$  that satisfies the height constraint, where $p'$ is the minimum period of $T[b_j..b_j+\ell')$. $s_j$ is the leftmost occurrence of $T[b_j..b_j+\ell_j)$ that satisfies the height constraint, and $p_j$ is the minimum period of $T[b_j..b_j+\ell_j)$.
\end{description}

\section{Efficient parsing algorithms for height-bounded encodings}
\label{sec:parsing}
\subsection{A linear time algorithm for \LZHB{1}}
\begin{theorem}\label{thm:lzhb1}
    For any integer $h$, an $h$-bounded encoding based on \LZHB{1} for a string over a linearly-sortable alphabet can be computed in linear time and space.
\end{theorem}
\begin{proof}
    The algorithm maintains an array $H[1..n]$ of integers, initially set to 0.
    Phrases are produced left to right.
    The algorithm maintains the invariant that when the encoding is computed up to position $i$, $H[j]$ gives the height in the referencing forest of position $j<i$.
    Using Property~\ref{stprop:construction} in Section~\ref{sec:prelims:strings},
    we build the suffix tree of $T$ in linear time, and preprocess it, again in linear time, for Property~\ref{stprop:leftmostocc_of_factor}.
    We also precompute and store all
    the lengths of the longest previously occurring factor at each position, i.e., $\lpf_T(i)$ for all $1 \leq i \leq n$, in linear time~\cite{DBLP:journals/ejc/CrochemoreIIKRW13}.

    Suppose we have computed the $h$-bounded encoding for $T[1..b_j)$ and would like to compute the $j$th phrase $(\ell_j,s_j)$ starting at position $b_j$.
    We can compute
    $s_j = \leftmostOcc_T(b_j,\lpf_T(b_j))$, i.e.,
    the leftmost occurrence of the longest previously occurring factor starting at position $b_j$,
    in constant time (Property~\ref{stprop:leftmostocc_of_factor}).
    The encoding of the phrase starting at $b_j$ is then $(s_j,\ell_j)$, where $\ell_j \leq \ell$ is the largest value such that every value in $H[s_j..\min(b_j,s_j+\ell_j))$ is less than $h$.
    This can be computed in $O(\ell_j)$ time by simply scanning the above values in $H$.
    Notice that we do not need to check the height beyond position $b_j$, since this would imply that the phrase is self-referencing, and the remaining heights will be copies of the first period of the phrase which
    were already checked to satisfy the height constraint.
    After pair $(s_j,\ell_j)$ is determined,
    values in $H[b_j..b_j+\ell_j)$ are determined according to Equation~(\ref{eq:height:period}), in $O(\ell_j)$ time.

    The total time is thus proportional to the sum of the phrase lengths, which is $O(n)$.
\end{proof}

\subsection{A linear time algorithm for \LZHB{2}}\label{sec:lzhb2}
The difference between \LZHB{1} and \LZHB{2} is that while \LZHB{1} fixes the source to the leftmost occurrence of the longest previously occurring factor starting at $b_j$,
\LZHB{2} considers the leftmost position of the candidate substring.
Since shorter lengths allow the source to be further to the left where heights tend to be shorter, it may allow longer phrases.

We show below that we can still achieve a linear time parsing algorithm.
\begin{theorem}\label{thm:lzhb2}
    For any integer $h$, an $h$-bounded encoding based on \LZHB{2} for a string over a linearly-sortable alphabet can be computed in linear time and space.
\end{theorem}
\begin{proof}
    The same steps as in the first paragraph in the description of $\LZHB{1}$ in \cref{thm:lzhb1} are taken
    (except for the precomputing of $\lpf_T$ which is not needed here).

    Suppose we have computed the $h$-bounded encoding for $T[1..b_j)$ and would like to compute the $j$th phrase $(\ell_j,s_j)$ starting at position $b_j$.
    We start with $\ell=1$, and increase $\ell$ incrementally until there is no previous occurrence of $T[b_j..b_j+\ell)$, or the leftmost occurrence of $T[b_j..b_j+\ell)$ violates the height constraint, and will use the last valid value of $\ell$ for $\ell_j$.

    For a given $\ell$, using Property~\ref{stprop:leftmostocc_of_factor},
    we can obtain the leftmost occurrence of $T[b_j..b_j+\ell)$, i.e., $s'_j = \leftmostOcc_T(b_j,\ell)$, in constant time.
    We need to check whether
    all values in
    $H[s'_j..\min(b_j,s'_j+\ell))$
    are less than $h$.
    Since $s'_j$ may change for different values of $\ell$,
    we potentially need to check all values in $H[s'_j..\min(b_j,s'_j+\ell))$.

    To do this efficiently,
    we maintain another array $C$, where $C[i]$ stores the position $j$ in $H$,
    such that $j$ is the smallest position in $H$ that is greater than or equal to $i$,
    such that $H[j] = h$, if it exists, or $\infty$ otherwise, i.e.,
    $C[i] = \min (\{\infty\} \cup \{ j \geq i \mid H[j] = h \})$.
    Then, $\max(H[i..j]) +1 \leq h$ if and only if $C[i] > j$. Thus, checking the height constraint can be done in constant time for a given $\ell$.

    After $(s_j,\ell_j)$ are determined,
    values in $H[i..i+\ell_j)$ are determined according to Equation~(\ref{eq:height:period}), in $O(\ell_j)$ time.
    The array $C$ can be maintained in linear total time: all values in $C$ are initially set to $\infty$,
    and whenever we store the value $h$ in $H[j]$ for some $j$, we set $C[i']= j$ for all
    $i'$ such that $\max (\{ 0\} \cup \{ i < j \mid H[i] = h \}) < i' \leq j$,
    i.e., we update all values in $C[1..i)$ that were $\infty$.
    Clearly
    each element of $C$ is set at most once.
    Thus, the total time is $O(n)$.
\end{proof}

\subsection{An $O(n\log \sigma)$ time algorithm for \LZHB{3}}\label{sec:lzhb3}

\begin{theorem}
    For any integer $h$, an $h$-bounded encoding based on \LZHB{3} for a string  can be computed in $O(n\log \sigma)$ time and $O(n)$ space.
\end{theorem}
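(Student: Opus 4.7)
The plan is to construct the suffix tree of $T$ online using Ukkonen's algorithm in $O(n\log\sigma)$ time and $O(n)$ space (Property~\ref{stprop:construction}), and maintain alongside the heights array $H[1..n]$ that gets filled in as phrases are produced left to right. To compute the phrase at position $b_j$, I would descend the suffix tree from the root by matching $T[b_j], T[b_j+1], \ldots$ one symbol at a time at $O(\log\sigma)$ per symbol (Property~\ref{stprop:prefixquery}), tracking the current (implicit or explicit) node $v$ at string depth $d$. The descent needs to check at each step whether the subtree of $v$ contains some leaf $s < b_j$ with $H[s..\min(s+d, b_j)) < h$, i.e., a valid previous occurrence of $T[b_j..b_j+d)$ under the height bound. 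Since validity at depth $d+1$ trivially implies validity at depth $d$ (the same leaf still works, because both the relevant height range and the substring shrink), the descent can stop at the largest $d$ for which a valid leaf exists; I would then set $\ell_j = d$ and $s_j$ to the leftmost such valid leaf, and fill in $H[b_j..b_j+\ell_j)$ via Equation~(\ref{eq:height:period}).

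For the validity test, I would augment each suffix tree node $v$ with a summary $\mu(v) = \max\{M_j(s) : s \text{ is a leaf in the subtree of } v, \; s < b_j\}$, where $R(s) = \min\{k \geq s : H[k] = h\} - s$ (with $R(s) = \infty$ if no such $k$ exists) and $M_j(s) = R(s)$ if $s + R(s) < b_j$, and $M_j(s) = \infty$ otherwise. A case split on whether the occurrence at $s$ is non-self-referencing ($s+d \leq b_j$) or self-referencing ($s+d > b_j$) shows that $s$ is valid at depth $d$ if and only if $M_j(s) \geq d$, reducing the subroutine to the constant-time comparison $\mu(v) \geq d$. The leftmost valid leaf at the final node can be retrieved by descending greedily into the leftmost child whose $\mu$-value still meets the threshold.

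The main obstacle will be maintaining the $\mu$ annotations within the overall $O(n\log\sigma)$ budget, given that leaves are added to the tree during Ukkonen's construction and $H$-values are assigned progressively as phrases are produced. The key leverage is monotonicity: each $M_j(s)$ starts at $\infty$, transitions to a finite value at most once (when $b_j$ first exceeds $s + R(s)$), and thereafter can only decrease, which happens when a newly computed height reaches $h$ and shortens $R(s)$. By processing these events along the suffix-link spine and propagating updates to $\mu$ lazily to ancestors, I would aim for amortized $O(\log\sigma)$ cost per symbol of $T$ for both the descent and the annotation maintenance, yielding the claimed total bound.
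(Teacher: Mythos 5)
Your characterization of validity is correct: with $R(s)$ the distance from $s$ to the first height-$h$ position, the test ``$s$ is a valid source at depth $d$ iff $M_j(s)\geq d$'' does capture both the non-self-referencing and the self-referencing case, and validity is indeed monotone in $d$, so a greedy descent matches the LZHB3 definition. The gap is in the data structure that is supposed to make this test cheap. Maintaining $\mu(v)=\max_s M_j(s)$ over all subtrees is a subtree-aggregate under leaf-value updates in a tree of depth up to $\Theta(n)$ that is itself changing (Ukkonen splits edges and adds leaves): a single leaf whose $M_j(s)$ drops from $\infty$ to a finite value may force updates at $\Theta(n)$ ancestors, and even though each leaf changes value only $O(1)$ times, the total propagation cost is not bounded by your argument. ``Propagating lazily along the suffix-link spine'' is not a mechanism --- suffix links do not follow ancestor paths --- and no scheme is given that amortizes this to $O(\log\sigma)$ per text symbol. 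This maintenance problem is the entire difficulty of the theorem, and it is exactly what the paper's proof is engineered to avoid: instead of annotating the tree, it indexes the modified string $T'$ in which every position of height $h$ is replaced by a sentinel $\$$, so that the \emph{plain} prefix query of Property~\ref{stprop:prefixquery} already returns only height-respecting occurrences, with no per-node bookkeeping at all.

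Two further problems. First, you cannot have both an online tree and self-referencing occurrences found by descent: in the suffix tree of $T[1..b_j)$ a leaf $s<b_j$ has string depth only $b_j-s$, so an occurrence with $s+d>b_j$ is simply not a depth-$d$ path and your descent will never reach it; if instead you build the full suffix tree of $T$ offline, you must additionally (and dynamically) exclude all leaves $s\geq b_j$ from every $\mu(v)$, which worsens the maintenance problem. The paper sidesteps this by first finding the longest non-self-referencing valid prefix $x$ via the tree, and then searching for $x$ only inside the length-$(<2|x|)$ window ending at $b_j+|x|-1$, where Lemma~\ref{lem:internal_occ} guarantees the occurrences form an arithmetic progression whose common difference is the period of $x$, so any of them extends to the same maximal $\ell_j$. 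Second, retrieving the source by ``descending into the leftmost child meeting the threshold'' yields the lexicographically smallest valid suffix, not the leftmost occurrence in text order, which is what LZHB3 requires; you would need yet another annotation (minimum text position among currently valid leaves) subject to the same unsolved maintenance issue.
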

\begin{proof}
    Our algorithm is a slight modification of a folklore algorithm to compute LZ77 in $O(n\log\sigma)$ time based on Ukkonen's online algorithm for computing suffix trees~\cite{DBLP:journals/algorithmica/Ukkonen95} described, e.g., by Gusfield~\cite{gusfield_1997}.
    We first describe a simple version which does not allow self-references.

    At a high level, when computing the $j$th phrase that starts at position $b_j$, we use the suffix tree of $T[1..b_j)$ so that we can find the longest previous occurrence of $T[b_j..]$. We can find this by simply traversing the suffix tree from the root with $T[b_j..]$ as long as possible (Property~\ref{stprop:prefixquery} in Section~\ref{sec:prelims:strings}).
    Once we know the length $\ell_j$ of the phrase, we simply append the phrase $T[b_j..b_j+\ell_j)$ to the suffix tree to obtain the suffix tree for $T[1..b_{j+1})$, and continue.
    An occurrence
    (particularly, the leftmost occurrence)
    can be obtained easily by storing this information on each edge when it is constructed.

    For our height-bounded case, the precise height of a position in a phrase can only be determined after $\ell_j$ is determined, and if $\ell_j \geq 2$,
    the leftmost occurrence $s_j$ of $T[b_j..b_j+\ell)$ that satisfies the height constraint must be determined.
    Since a reference adds 1 to the height, we know that the height constraint is satisfied if and only if we reference positions that have height less than $h$.
    Therefore, when adding the symbols of the phrase $T[b_j..b_j+\ell_j)$ to the suffix tree
    after determining their heights,
    we change symbols corresponding to positions with height $h$ to a special symbol $\$$ which does not match any other symbols.
    In other words, for $T[1..b_j)$ and their heights $H[1..b_j)$,
    we use and maintain a suffix tree for $T'[1..b_j)$,
    where $T'[i] = T[i]$ if $H[i] < h$ and $T'[i] = \$$ otherwise.
    This simple modification allows us to
    reference only and all substrings with heights less than $h$,
    so that we can find the longest prefix of $T[b_j..]$ that occurs in $T[1..b_j)$ and also satisfies the height constraint, by simply traversing the suffix tree of $T'[1..b_j)$ (Property~\ref{stprop:prefixquery}).

    Next, in order to allow self-references, we utilize a combinatorial observation made previously, that self-references lead to periodicity.
    Let $x = T[b_j..b_j+|x|)$ be the longest prefix of
    $T[b_j..n]$ such that there is a non-self-referencing occurrence
    in $T'[1..b_j)$.
    We can find a longer self-referencing prefix of $T[b_j..n]$, if it exists, as follows:
    Any such occurrence must be prefixed by $x$,
    and in order for the phrase to be self-referencing, this $x$
    must occur as a substring in $T[b_j-|x|..b_j+|x|-1)$.
    Since the heights after the first period of the phrase are essentially defined as copies of the heights of the first period of the phrase (see Equation~(\ref{eq:height:period})),
    any occurrence of $x$ (and any of its extensions) at or after the leftmost position
    $t_j$ in $[b_j-|x|,b_j)$ such that $H[t_j..b_j) < h$, would satisfy the height constraint.

    If there is no occurrence of $x$ in
    $T[\max(t_j,b_j-|x|)..b_j+|x|-1)$,
    then a longer self-referencing phrase satisfying the height constraint cannot exist, so $\ell_j=|x|$.
    If there is only one such occurrence of $x$, we can use this occurrence and extend the phrase by naive symbol comparisons to obtain $\ell_j$.
    If there are more than one such occurrence,
    it follows that there are at least three occurrences of $x$ in $T[\max(t_j,b_j-|x|)..b_j+|x|)$,
    a string of length (at most) $2|x|$.
    From Lemma~\ref{lem:internal_occ}, this implies that the occurrences of $x$ form an arithmetic progression,
    whose common difference is the minimum period of $x$. Thus, extending the phrase as above from any occurrence of $x$ in $T[\max(t_j,b_j-|x|)..b_j+|x|-1)$ will lead to the same maximal length due to the periodicity, i.e., $\ell_j$ is the maximum value such that $T[b_j..b_j+\ell_j)$ has the same minimum period as $x$.
    We can find all such occurrences
    (which obviously includes the leftmost)
    in $O(\ell_j)$ time using any linear time pattern matching algorithm, e.g., KMP~\cite{DBLP:journals/siamcomp/KnuthMP77}.

    The time to compute phrase $(\ell_i,s_i)$ is $O(\ell_i\log\sigma)$, so we spend $O(n\log\sigma)$ time in total.
\end{proof}
Pseudo-code for the algorithm is shown in Algorithm~\ref{alg:lz77h3_selfreferencing2} in \cref{appendix:pseudocode}.

\subsection{An $O(n\log \sigma)$ time algorithm for \LZHB{4}}
\begin{theorem}
    For any integer $h$, an $h$-bounded encoding for a string based on \LZHB{4} can be computed in $O(n\log \sigma)$ time and $O(n)$ space.
\end{theorem}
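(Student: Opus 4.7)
The plan is to adapt the LZHB3 algorithm by replacing its self-reference bookkeeping with the periodic encoding that is intrinsic to LZHB4. Exactly as in the LZHB3 proof, I will maintain an online suffix tree over the growing string $T'[1..b_j)$, where $T'[i] = T[i]$ when $H[i] < h$ and $T'[i] = \$$ otherwise (with $\$$ a fresh sentinel), and store on each edge the leftmost position of an occurrence passing through it. A substring of $T'[1..b_j)$ that contains no $\$$ corresponds exactly to a previous occurrence in $T[1..b_j)$ whose positions all lie at height strictly less than $h$, so the suffix tree directly indexes the feasible sources for a modified phrase.

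To compute the phrase at $b_j$ I will run three processes in parallel. First, by Property~\ref{stprop:prefixquery}, I will descend the suffix tree with $T[b_j..n]$ to obtain, in $O(L\log\sigma)$ time, the length $L$ of the longest prefix of $T[b_j..n]$ that occurs in $T'[1..b_j)$, together with the leftmost occurrence recorded at every intermediate depth. Second, using the border-array construction of Lemma~\ref{lem:borderarray}, I will extend the minimum period $p'(\ell')$ of $T[b_j..b_j+\ell')$ one symbol at a time in amortized $O(1)$ per step. Third, I will scan $T$ from $b_j$ to obtain the length $r$ of the maximal run of $T[b_j]$ starting at $b_j$.

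The key observation is that because $p'(\ell')$ is non-decreasing in $\ell'$ and the length-$p'(\ell')$ prefix of $T[b_j..n]$ is itself a prefix of the length-$L$ prefix whenever $p'(\ell')\leq L$, the nested LZHB4 requirement ``for every $1\leq \ell'\leq \ell_j$, some previous occurrence of $T[b_j..b_j+p')$ lies entirely at height below $h$'' collapses to the single test $p'(\ell_j)\leq L$. I therefore grow $\ell'$ while $p'(\ell')\leq L$, obtaining the largest $\ell^{(2)}$ satisfying case~(b), then set $\ell_j=\max(r,\ell^{(2)})$ and emit either $(\ell_j,T[b_j],1)$ (if $r$ is larger) or $(\ell_j,s_j,p_j)$ with $p_j = p'(\ell_j)$ and $s_j$ read off as the leftmost occurrence recorded at depth $p_j$. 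Finally I update $H[b_j..b_j+\ell_j)$ via Equation~(\ref{eq:height:period}) and feed the $\ell_j$ new symbols into Ukkonen's online construction in $O(\ell_j\log\sigma)$ amortized time.

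Since $L\leq\ell^{(2)}\leq\ell_j$, the per-phrase cost is $O(\ell_j\log\sigma)$ and summing gives $O(n\log\sigma)$ total time in $O(n)$ space. The step I expect to require the most care is the correctness claim: reducing the quantified LZHB4 condition to the single inequality $p'(\ell_j)\leq L$, and justifying that the leftmost height-bounded occurrence of the length-$p_j$ prefix (rather than of the full length-$\ell_j$ phrase) is a valid choice of $s_j$. Both rest on the monotonicity of $p'$ given by Lemma~\ref{lem:borderarray}, together with the fact that the decoder of Equation~(\ref{eq:height:period}) only ever inspects the $p_j$ positions $[s_j,s_j+p_j)$ when reconstructing a phrase.
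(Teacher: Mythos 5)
Your proposal is correct and follows essentially the same route as the paper: an online suffix tree over the height-masked string $T'$ (with $\$$ at height-$h$ positions), the border-array maintenance of the minimum period from Lemma~\ref{lem:borderarray}, and the observation that monotonicity of the period collapses the quantified LZHB4 condition to a single feasibility test on the length-$p_j$ prefix. The only (immaterial) difference is that you compute the longest feasible prefix length $L$ by one full descent and then compare periods against it, whereas the paper interleaves the two, advancing the suffix-tree traversal only when the minimum period strictly increases; both yield $O(\ell_j\log\sigma)$ per phrase since $L\leq\ell_j$.
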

\begin{proof}
    The algorithm is similar to \LZHB{3}, in that it computes the suffix tree of $T[1..b_j)$ in an online manner, and after determining the new phrase $T[b_j..b_j+\ell_j)$ and their heights $H[b_j..b_j+\ell_j)$,
    symbols of the new phrase are added to the suffix tree
    except when its position has height $h$, in which case $\$$ is added.

    In order to determine the \LZHB{4} phrase, we first compute the \LZHB{3} phrase, i.e.,
    the longest prefix of $T[b_j..n]$ that has a previous occurrence satisfying the height constraint. Let its length be $\ell'$.
    This implies that any prefix period of $T[b_j..n]$ that can be referenced (and satisfies the
    height constraint) is at most $\ell'$.
    Since prefix periods are non-decreasing (Lemma~\ref{lem:borderarray}),
    we have that the \LZHB{4} phrase $T[b_j..b_j+\ell_j)$ is the longest prefix of $T[b_j..n]$ with period $p_j \leq \ell'$.
    This can be computed in $O(\ell_j)$ time (again Lemma~\ref{lem:borderarray}).
    The leftmost occurrence of $T[b_j..b_j+p_j)$ to be referenced
    can be found simply by traversing the suffix tree if $p_j$ is at most the length of the
    longest non-self-referencing prefix of the \LZHB{3} phrase,
    and otherwise, will coincide with (the starting position of) the self-referencing occurrence of the \LZHB{3} phrase.
    Since $\ell' \leq \ell_j$ must hold as well, each \LZHB{4} phrase is computed in $O(\ell_j\log\sigma)$ time, and thus, \LZHB{4} can be computed in $O(n\log\sigma)$ total time and $O(n)$ space.
\end{proof}

\subsection{Algorithms for the greedier heuristic}
Lipt\'ak et al.~\cite{liptak2024batlz} also propose a {\em greedier} heuristic
for height bounded LZ-like encodings,
in which, for any phrase, the previous occurrence that minimimzes the maximum height is chosen.
While a na\"ive algorithm runs in $\Theta(n^2)$ time, Lipt\'ak et al. propose an algorithm
for which they show an upper bound of $O(z'n^2\log n) = O(n^3\log n)$ time.
We show here that \LZHB{3} and \LZHB{4} can be modified to support the greedier heuristic in total
$O(n\log\sigma + \mathit{occ})$ time, where $\mathit{occ}=O(z'n)$ is the total number of previous occurrences of all phrases.

Suppose we are able to obtain all previous occurrences of the phrases.
The array $H$ is not static, but only append operations are performed on it, and so a range maximum query data structure for $H$ can be maintained
in amortized $O(1)$ time per update and query~\cite{DBLP:conf/wads/Fischer11,DBLP:conf/sofsem/UekiDKMNYBIS17}.
Using this, the maximum height of a given occurrence can be checked in amortized $O(1)$ time, and the occurrence giving the smallest maximum height can be found in additional $O(n+\mathit{occ})$ time.

All previous occrrences of a phrase can be obtained in
additional $O(n+\mathit{occ})$ total time using standard techniques on the suffix tree.
Recall that when computing the $j$th phrase, our algorithm traverses the suffix tree for finding a longest non-self-referencing occurrence.
The leaves below the reached position contain the occurrences of the phrase,
and since any non-leaf node of a suffix tree has at least two children,
these leaves can be found in time linear in their number,
thus in additional $O(\mathit{occ})$ total time.
All self-referencing occurrences can be found in $O(\ell_j)$ time as described in Section~\ref{sec:lzhb3}.
A~minor detail we have skipped is that since we only have the implicit suffix tree being built
via Ukkonen's online construction algorithm,
the suffix tree does not have leaves corresponding to suffixes that have a previous occurrence.
This can be dealt with as follows.
Since Ukkonen's algorithm maintains the longest repeating suffix of the current text as the
    {\em active point} in the suffix tree, we can also maintain some previous occurrence $T[u..v]$ of it.
Since any previous occurrence of the phrase contained in the longest repeating suffix can be mapped to an occurrence in $T[u..v]$,
we can, given all the occurrences of the phrase in $T[u..v]$ for which there is a corresponding leaf in the suffix tree,
find all occurrences in the longest repeating suffix in
$O(n+\mathit{occ})$ total time.

\subsection{Implementation using suffix arrays}
The {\em suffix array}~\cite{DBLP:conf/soda/ManberM90} of a string $T$ of length $n$,
is an array $\mathit{SA}[1..n]$ of integers such that $T[\mathit{SA}[i]..n]$
is the $i$th lexicographically smallest suffix of $T$.

Suffix arrays are well known as a lightweight alternative to suffix trees.
While it is not difficult to use suffix arrays in place of suffix trees for static strings~\cite{DBLP:journals/jda/AbouelhodaKO04}
by mapping nodes of the suffix tree to ranges in the suffix array,
it is not straightforward to do this for \LZHB{3} and \LZHB{4}, since the algorithms work in an online manner:
the string $T'$ for which the suffix tree is maintained is determined during the computation,
and is not known in advance.
Here, we show how to simulate the algorithm on the suffix tree for $T'$
by using the suffix array $\mathit{SA}$ of $T$,
in amortized $O(\log n)$ time per suffix tree operation,
thus obtaining an $O(n\log n)$ time algorithm for \LZHB{3} and \LZHB{4}.
The running times of the greedier versions can be similarly bounded by
$O(n\log n + \mathit{occ}\log n) = O(n\log n + z'n\log n) = O(n^2\log n)$.

In a suffix tree of $T$, the effect of replacing a symbol $T[i]$ at position $i$ with $\$$
can be viewed as an operation that truncates the paths of suffixes starting at position
$j < i$ for which $T[j..i)$ does not contain $\$$, to length $i-j$, i.e., to $T[j..i)$.
We will maintain an array $L$ of integers, where $L[\mathit{rank}[i]]$ holds
the valid length of the suffix starting at position $i$ and $\mathit{rank}[i]$ is the lexicographic rank of the suffix $T[i..n]$,
i.e., $\mathit{SA}[\mathit{rank}[i]] = i$.
All values are initially set to $0$, indicating that the suffix has not yet been inserted into the suffix tree.
During the online construction of the suffix tree of $T'$,
when the suffix $T[i..n]$ is added to the suffix tree,
we set $L[\mathit{rank}[i]] = n-i+1$.
When a $\$$ symbol is appended, we modify the values in~$L$ of relevant suffixes.
Since the value of each position in $L$ is modified at most twice (from $0$ to $n-i+1$ and then to the length up to the next $\$$),
the total number of updates on $L$ is at most $2n$.

The traversal on the suffix tree can then be simulated by a standard search for finding the lexicographic range in the suffix array
of suffixes that are prefixed by the considered substring. This range can be computed in $O(\log n)$ time per symbol
by a simple binary search.
In order to make sure we do not traverse truncated suffixes, we allow the traversal if and only if there
exists at least one suffix in the lexicographic range whose valid length is at least the length of the substring being traversed.
This can be checked in $O(\log n)$ time by using a segment tree~\cite{DBLP:journals/tc/BentleyW80,DBLP:journals/siamcomp/Chazelle88} for representing $L$,
which allows updates and range maximum queries on $L$ in $O(\log n)$ time.
All occurrences, i.e., suffixes in the range for which the value in $L$
is at least the length of the substring,
can be enumerated in $O(\log n)$ time per occurrence,
by a common technique~\cite{DBLP:conf/soda/Muthukrishnan02}
that calls range maximum queries recursively on sub-ranges excluding the maximum value,
until all ranges only contain values less than the desired length.

A minor difference with the suffix tree version is that because we are able to insert the whole suffix starting at a given position rather than just a single symbol at the position,
the algorithm will naturally handle self-referencing occurrences, and they no longer require special care.
Also, while retrieving an occurrence is easy, we note that we can no longer retrieve the {\em leftmost} occurrence for the greedy versions in the same time bound - the chosen occurrence will be a prefix of the suffix with longest valid length.

\section{Height bounded LZ-like encodings as repetitiveness measures}
\label{sec:optimal}
Here, we consider the strength of height bounded encodings as repetitiveness measures~\cite{DBLP:journals/csur/Navarro21a}.
Denote by $\lzhbOpt{h}$, the optimal (smallest) LZHB encoding whose height is at most $h$.
It easily follows that $\lzhbOpt{h}$ is monotonically non-increasing in $h$, i.e., for any $h \leq h'$, it holds that $\lzhbOpt{h} \geq \lzhbOpt{h'}$.
We will also denote by $\lzhbOptn{h}$,
the optimal modified LZ-like encoding whose height is at most $h$.

A first, and obvious, observation is that if the height is allowed to grow to $n$, then the size of the height-bounded encoding and the LZ parsing are equivalent.
\begin{observation}\label{obs:nobound_is_lz77}
    $\lzhbOpt{n} = z$.
\end{observation}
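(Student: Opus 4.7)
The plan is to establish the equality $\lzhbOpt{n} = z$ by proving both inequalities separately, both of which are essentially one-line arguments once the definitions are unpacked.

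For the upper bound $\lzhbOpt{n} \leq z$, I would exhibit the ordinary LZ77 parsing itself as a valid $n$-bounded LZHB encoding. The number of phrases is $z$ by definition, so it remains only to verify the height bound. By Equation~(\ref{eq:height:period}), evaluating $\height(i)$ recurses on position $i' = s_j + ((i - b_j) \bmod (b_j - s_j))$ which satisfies $i' < b_j \leq i$ (because $s_j < b_j$ and the modulus is strictly less than $b_j - s_j$). Thus every recursive step strictly decreases the position index, so the recursion terminates after at most $n-1$ steps at some root (length-$1$) phrase. Consequently the height of every position is at most $n-1 < n$, and LZ77 is itself a valid $n$-bounded LZHB encoding, giving $\lzhbOpt{n} \leq z$.

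For the lower bound $\lzhbOpt{n} \geq z$, I would invoke the fact that any $n$-bounded LZHB encoding is, by definition, simply an LZ-like encoding (the height constraint being vacuous by the argument above). It is a classical folklore fact that the greedy LZ77 parsing produces the minimum possible number of phrases among all LZ-like parsings of a given string (see, e.g., \cite{DBLP:journals/csur/Navarro21a}); hence any LZ-like encoding has at least $z$ phrases, and in particular $\lzhbOpt{n} \geq z$.

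There is no real obstacle here: the argument is a direct unpacking of definitions together with two standard facts (heights of LZ-like encodings are automatically at most $n-1$, and LZ77 is optimal among LZ-like parsings in phrase count). The only thing that requires any care is pointing out that the recursion in Equation~(\ref{eq:height:period}) strictly decreases the position, which guarantees termination without ever appealing to the height bound.
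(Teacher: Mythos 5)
Your proof is correct and matches the paper's intent: the paper states this as an unproved observation, and your two-step argument (the height of any LZ-like encoding is automatically below $n$ because the recursion in Equation~(\ref{eq:height:period}) strictly decreases the position, so the constraint is vacuous; then greedy LZ77 optimality gives the lower bound) is exactly the reasoning the observation relies on. Note the paper itself later uses an even tighter version of your first step (in the proof of Theorem~\ref{thm:lowerbound_grl_and_git}), namely that the height is bounded by the number of phrases since each reference lands in a strictly earlier phrase.
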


It is also interesting to note height-bounded and run-length encodings can be related:
\begin{observation}
    $\lzhbOptn{0}$ is equivalent to the run-length encoding.
\end{observation}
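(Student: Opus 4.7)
The plan is to establish a two-sided identification: any modified LZ-like encoding achieving height $0$ must consist exclusively of period-$1$ (i.e., run) phrases, and among all decompositions into such phrases the smallest is precisely the run-length encoding. Together these give $\lzhbOptn{0}$ equal to the run-length encoding of $T$.

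First I would argue that every phrase in a height-$0$ modified encoding must be of the form $(\ell, c, 1)$. Suppose for contradiction that some phrase $(\ell_j, s_j, p_j)$ with $p_j \geq 2$ appears, starting at position $b_j$. By the definition of heights in the modified encoding, its starting position references $s_j + (b_j - b_j)\bmod p_j = s_j$, so $\height(b_j) = \height(s_j) + 1 \geq 1$, violating the height-$0$ bound. Hence every phrase has $p_j = 1$, meaning its underlying substring is a run of a single symbol.

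Next I would show the smallest decomposition of $T$ into run phrases is exactly the run-length encoding. Since each phrase is a substring of the form $c^{\ell_j}$ for some symbol $c$, no phrase can cross a boundary between two adjacent maximal same-character runs of $T$. Consequently, any valid decomposition into run phrases is a refinement of the partition of $T$ into its maximal runs, and uses at least as many phrases as there are maximal runs. The run-length encoding realises this lower bound exactly, with one phrase $(\ell, c, 1)$ per maximal run of $c$'s, and it is plainly a valid height-$0$ modified encoding. Therefore $\lzhbOptn{0}$ is achieved by and coincides with the run-length encoding.

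The main (and only) obstacle is a minor notational one: one must unpack the definition of the modified encoding carefully to confirm that phrases of the form $(\ell,c,1)$ genuinely contribute height $0$ to all of their positions, since the symbol $c$ is given directly in the encoding rather than inherited from an earlier position in the string, so these phrases are the sole candidates compatible with $h=0$. After this point the identification with the run-length encoding is immediate from the combinatorial observation above.
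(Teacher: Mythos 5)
Your proof is correct, and since the paper states this only as an unproved observation, your write-up simply makes explicit the intended argument: a phrase with minimum period $p_j\ge 2$ forces $\height(b_j)=\height(s_j)+1\ge 1$, so a height-$0$ modified encoding uses only literal period-$1$ phrases $(\ell,c,1)$, and the optimal such decomposition is one phrase per maximal run. No gaps; this matches what the paper leaves implicit.
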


\medskip

The following relation between $\lzhbOpt{h}$ and
the smallest size $\rlgOpt$ of RLSLPs (SLPs with run-length rules)  and $\hat{g}_{it(d)}$ of ISLPs~\cite{NavarroUrbinaLATIN2024} can be shown.

\begin{restatable}{theorem}{lowerboundGrl}\label{thm:lowerboundGrl}
    There exists a constant $c$ such that $\lzhbOpt{c\log n} = O(\rlgOpt)$ holds.
\end{restatable}

\begin{restatable}{theorem}{lowerboundGrlAndGit}\label{thm:lowerbound_grl_and_git}
    There exist a family of strings such that for some constant $c$, $\lzhbOpt{c\log n} = o(\rlgOpt)$ and $\lzhbOpt{c\log n} = o(\hat{g}_{it(d)})$.
\end{restatable}

When there is no height constraint,
\LZHB{3} is equivalent to LZ77 and thus gives an optimal LZ-like parsing.
For modified LZ-like encodings with no height contraints,
\LZHB{4} can be worse than optimal.
For example, for the string $\mathtt{abaxabcdababca}$, greedy gives
$\mathtt{a}|\mathtt{b}|\mathtt{a}|\mathtt{x}|\mathtt{ab}|\mathtt{c}|\mathtt{d}|\mathtt{abab}|\mathtt{c}|\mathtt{a}$ while
$\mathtt{a}|\mathtt{b}|\mathtt{a}|\mathtt{x}|\mathtt{ab}|\mathtt{c}|\mathtt{d}|\mathtt{ab}|\mathtt{abca}$
is a slightly smaller parsing.
However, we can show that the greedy algorithm has an approximation ratio of at most $2$.
\begin{restatable}{theorem}{twoApproximation}
    Let $\tilde{z}$ denote the size of the modified LZ-like encoding generated by \LZHB{4} with no height constraints.
    We also denote $\lzhbOptn{n}=\hat{\tilde{z}}$.
    Then, $\hat{\tilde{z}}
        \leq \tilde{z} \leq z \leq
        2\hat{\tilde{z}}$.
\end{restatable}

\section{Computational Experiments}
\label{sec:experiments}
We have developed prototype implementations of our algorithms described above, in C++.
Since Lipt\'ak et al.~\cite{liptak2024batlz} have already demonstrated the effectiveness and superiority
of \greedyBATLZ{} and \greedierBATLZ{} (which correspond to \LZHB{3} and the greedier \LZHB{3})
compared to the baselines and that the height can be reduced without increasing the size of the parse too much,
our experiments here focus
on the performance of our algorithms \LZHB{3} and \LZHB{4} in comparison with
\greedyBATLZ{} and \greedierBATLZ{}.
For the \textsf{BAT-LZ} variants, we use the code provided by Lipt{'a}k et al.\footnote{Available at:~\url{https://github.com/fmasillo/BAT-LZ}.} with slight modifications
to fit our test environment. Our implementations for \LZHB{3} and \LZHB{4} are available at~\url{https://github.com/dscalgo/lzhb/}.

Experiments were conducted on a system with
dual AMD EPYC 9654 2.4GHz processors and 768GB RAM running RedHat Linux 8.
We use the {\em Real} subset of the repetitive corpus\footnote{\url{https://pizzachili.dcc.uchile.cl/repcorpus.html}}.

Figure~\ref{figure:BATLZ-time-comparison} shows the running times for various height constraints
of the \textsf{BAT-LZ} variants as well as
\LZHB{3}, \LZHB{3SA}, \LZHB{4}, and \LZHB{4SA},
and their greedier versions,
where the suffix SA denotes the suffix array implementation.
Although \greedyBATLZ{} is fast when there is no height constraint,
its performance seems to diminish rapidly when a height constraint is set.
Interestingly, \greedierBATLZ{} seemed to run faster than \greedyBATLZ{}.
We observe that all versions of our \LZHB{3} and \LZHB{4} implementation outperform
\greedyBATLZ{} and \greedierBATLZ{} by a large margin.
For the greedier variants, the running times increase as the height constraint decreases.
This is because stricter height constraints generally lead to
shorter phrase lengths and therefore increases $z'$ and $\mathit{occ}$.
This becomes more apparent in the suffix array versions,
perhaps because the $n$ in the $O(\log n)$ factor for finding the suffix array range is actually the size of the range considered, which becomes larger for shorter pharses.

Figure~\ref{figure:BATLZ-space-comparison} shows the memory usage of the above mentioned implementations.
The memory usage for \LZHB{3} and \LZHB{4} (the suffix tree versions) are smallest for small heights,
because smaller height constraints imply that many paths in the suffix tree become truncated.
On the other hand, the memory usage is the highest for larger heights.
Memory usage for the other versions is unaffected by the height constraint.
\LZHB{3SA} and \LZHB{4SA} are always superior to \greedyBATLZ{},
and the same can be said for the greedier versions and their counterparts.

In summary, our suffix array implementations are always faster and use less memory than their {\sf BATLZ} counterparts,
with the exception of \greedyBATLZ{} with no height constraints.

Figure~\ref{figure:BATLZ-size-comparison} shows the sizes of the phrases of \LZHB{3}, \LZHB{4} and their greedier versions\footnote{We also ran experiments for LZ End parsing, but do not include results in the figures because they distort the plots. We include statistics for LZ End in Table~\ref{tab:lzend} in the Appendix.}.
Although a phrase of the modified LZ-like encoding is slightly larger than the traditional LZ-like encoding
since it includes the period,
we can see that a much smaller encoding with the same height, which can more than compensate for this increase, can be obtained in some cases. This seemed to be prominent for cere, para, and Escherichia\_Coli, which are all DNA.

\begin{figure}[ht]
    \centerline{
        \includegraphics[width=0.32\textwidth]{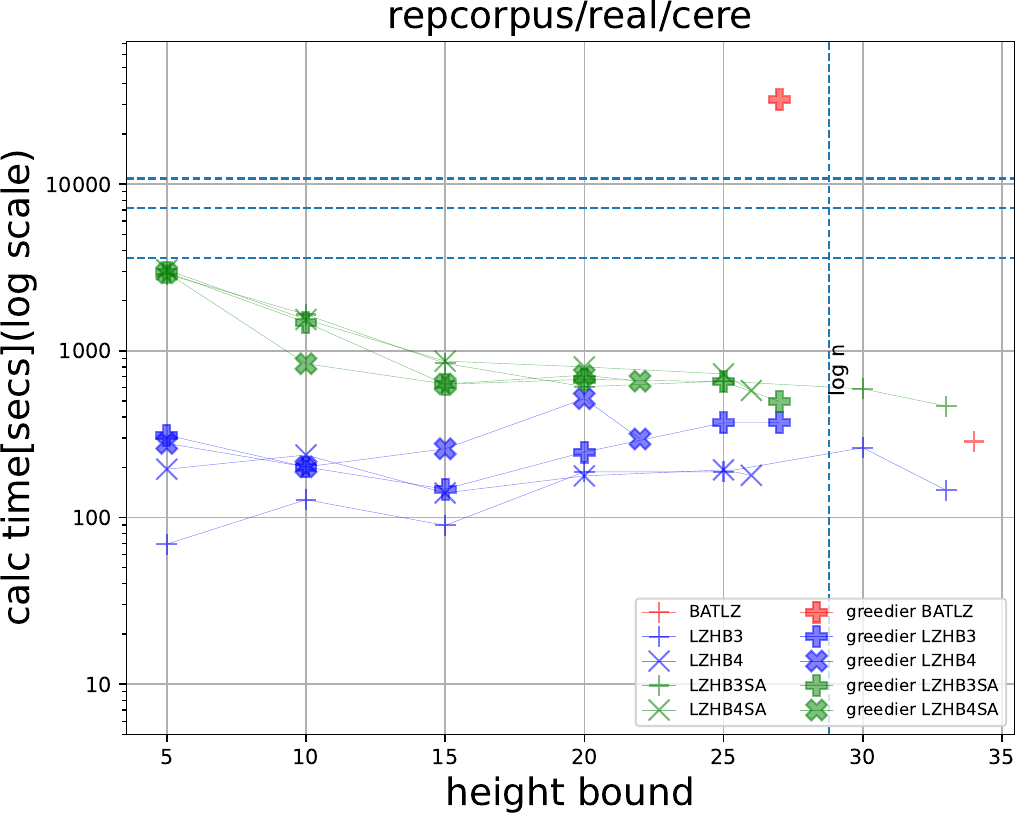}\hfill
        \includegraphics[width=0.32\textwidth]{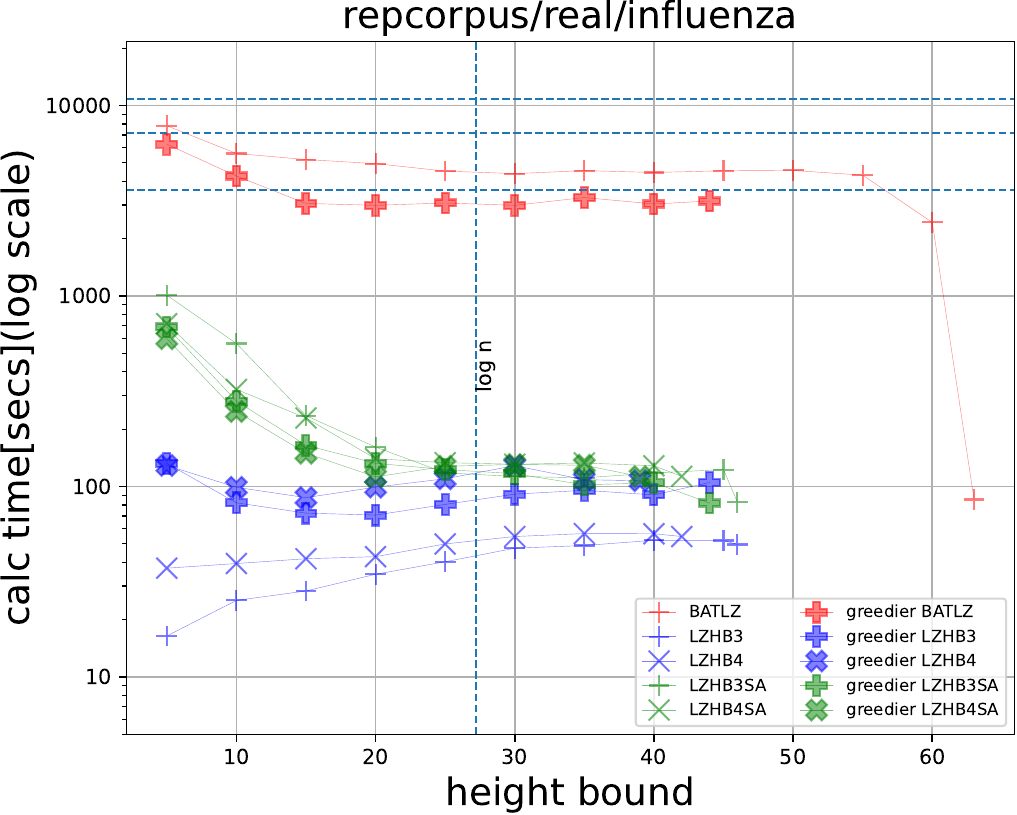}\hfill
        \includegraphics[width=0.32\textwidth]{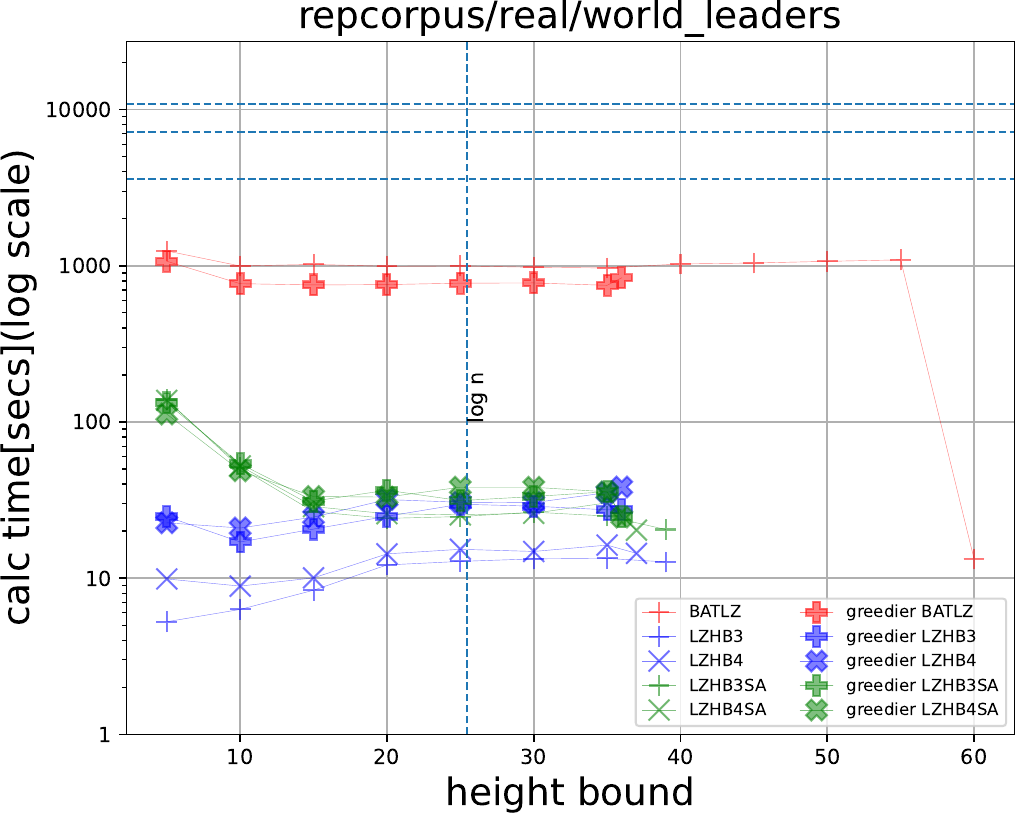}}
    \caption{Running times of \textsf{BAT-LZ} and \textsf{LZHB} variants.
        For each algorithm, the largest height is the height of the parsing obtained with no height constraint.
        Results for smaller height constraints are computed for intervals of $5$, $50$, or $100$, depending on the data.
        Missing plots indicate that the computation exceeded 3 hours.
        See also Fig.~\ref{figure:BATLZ-time-comparison-full} in~\cref{appendix:results}.}
    \label{figure:BATLZ-time-comparison}
\end{figure}
\begin{figure}[ht]
    \centerline{
        \includegraphics[width=0.32\textwidth]{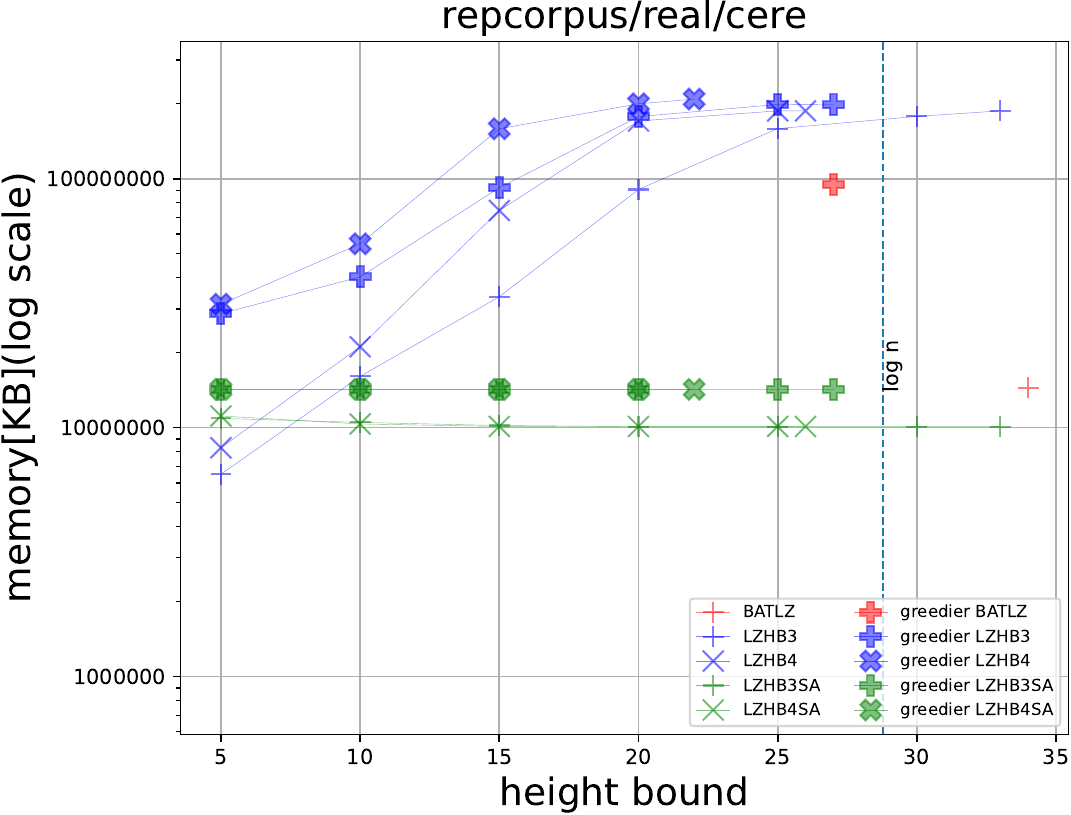}\hfill
        \includegraphics[width=0.32\textwidth]{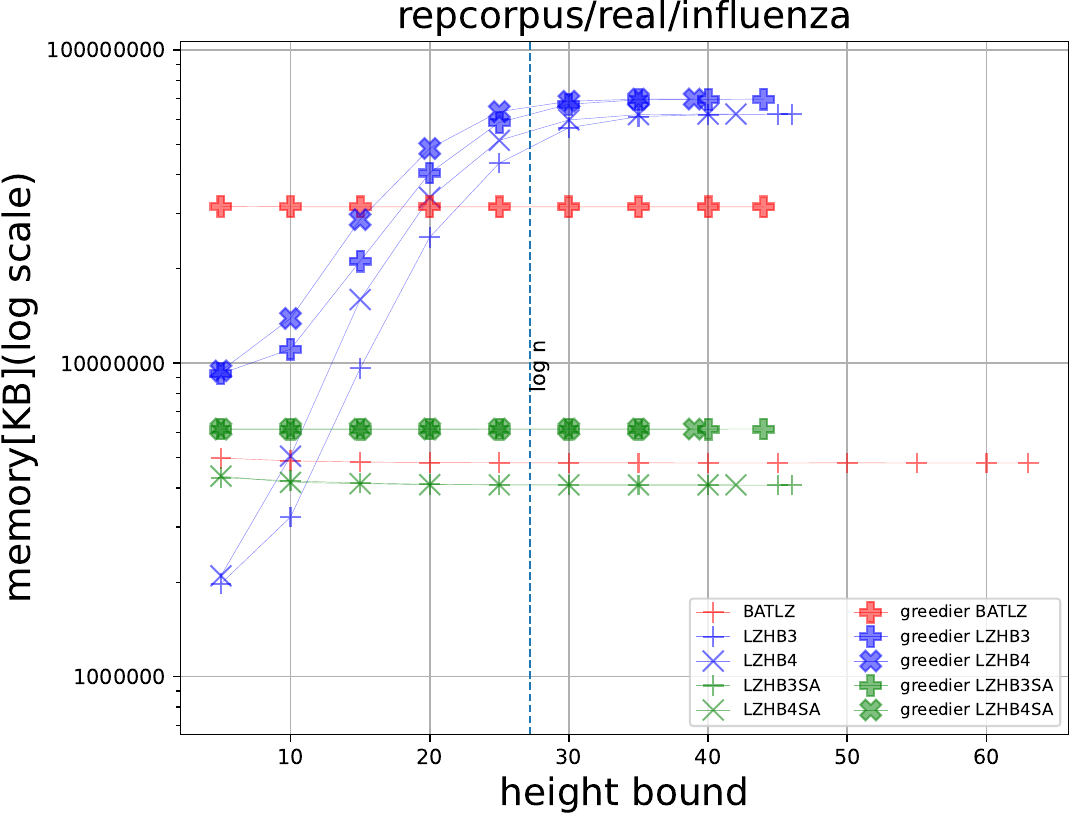}\hfill
        \includegraphics[width=0.32\textwidth]{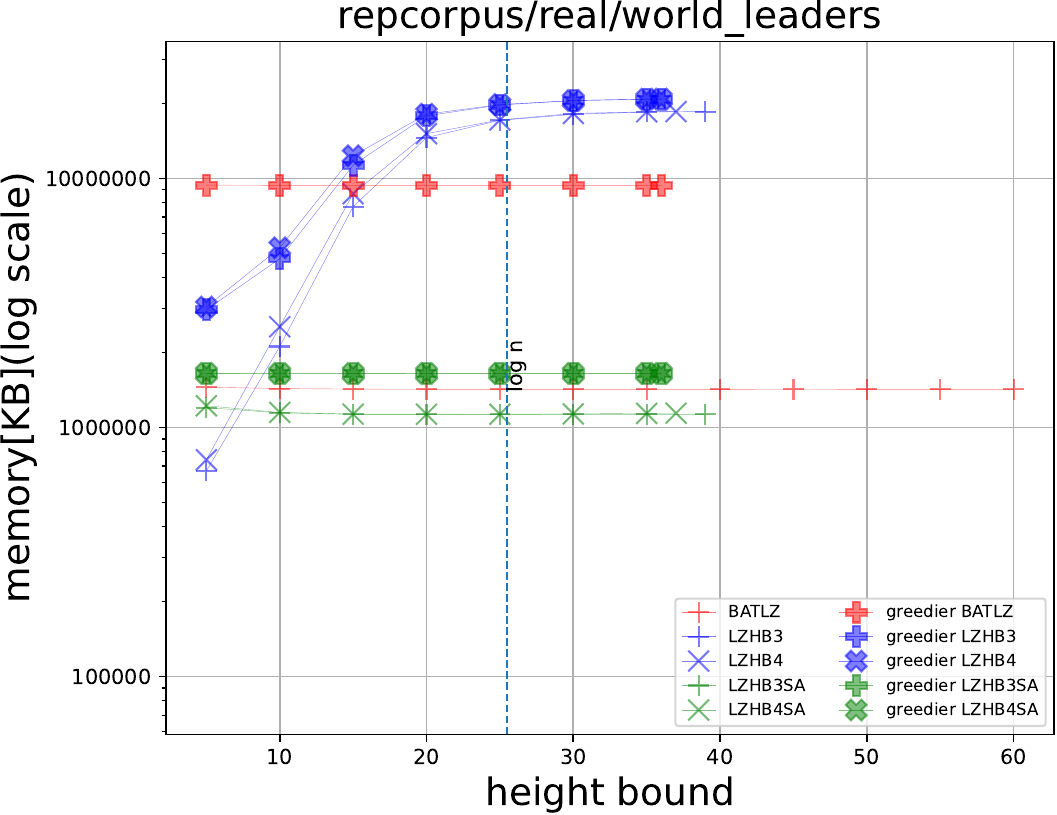}}
    \caption{Memory usage of \textsf{BAT-LZ} and \textsf{LZHB} variants measured by \texttt{getrusage}.
        For each algorithm the largest height is the height of the parsing obtained with no height constraint.
        The results for smaller height constraints are computed for intervals of $5$, $50$, or $100$, depending on the data.
        Missing plots indicate that the computation exceeded 3 hours.
        See Fig.~\ref{figure:BATLZ-space-comparison-full} in~\cref{appendix:results}.}
    \label{figure:BATLZ-space-comparison}
\end{figure}

\begin{figure}[ht]
    \includegraphics[width=0.32\textwidth]{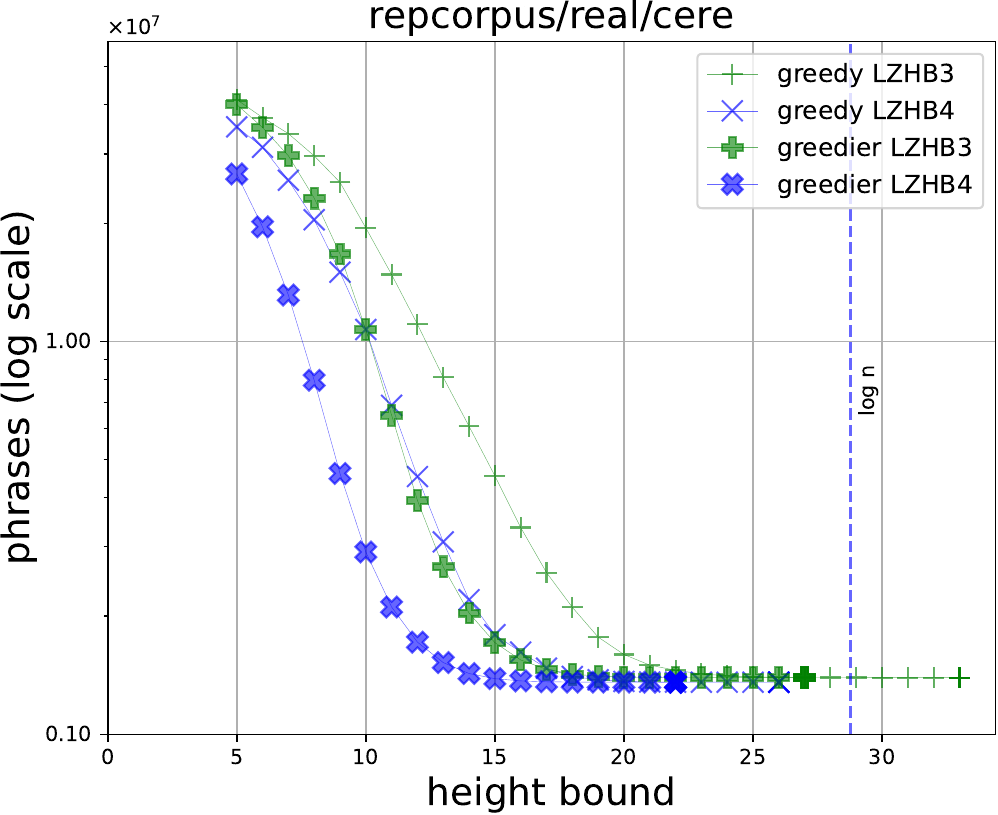}\hfill
    \includegraphics[width=0.32\textwidth]{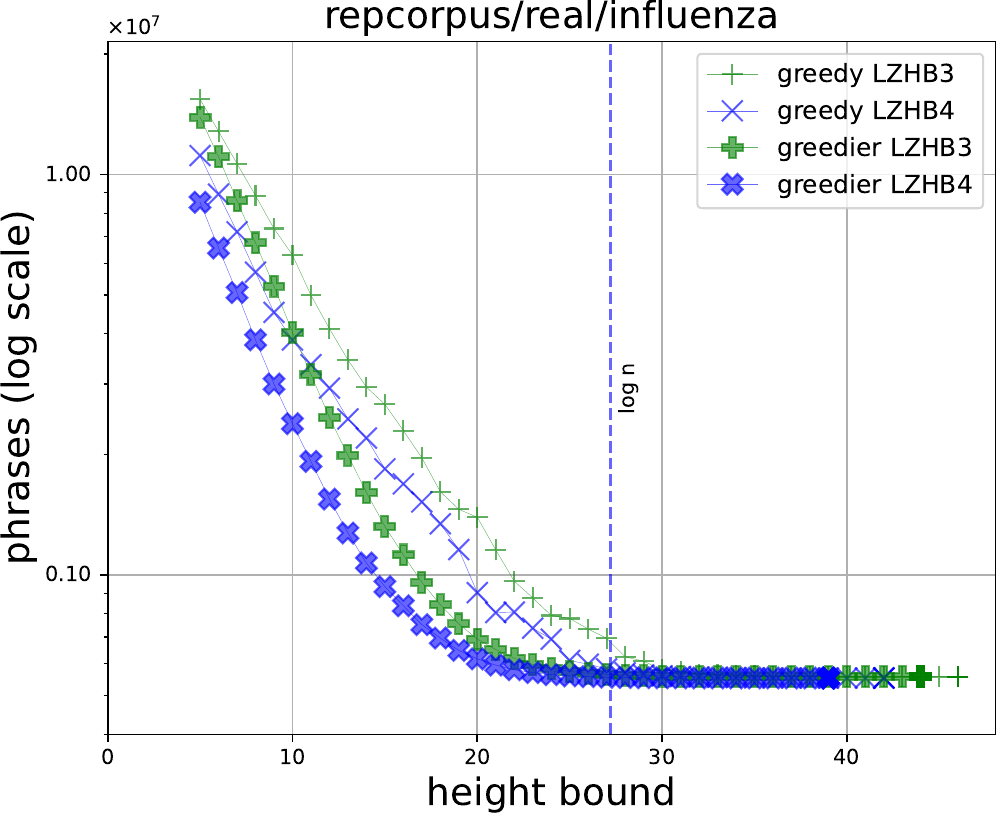}\hfill
    \includegraphics[width=0.32\textwidth]{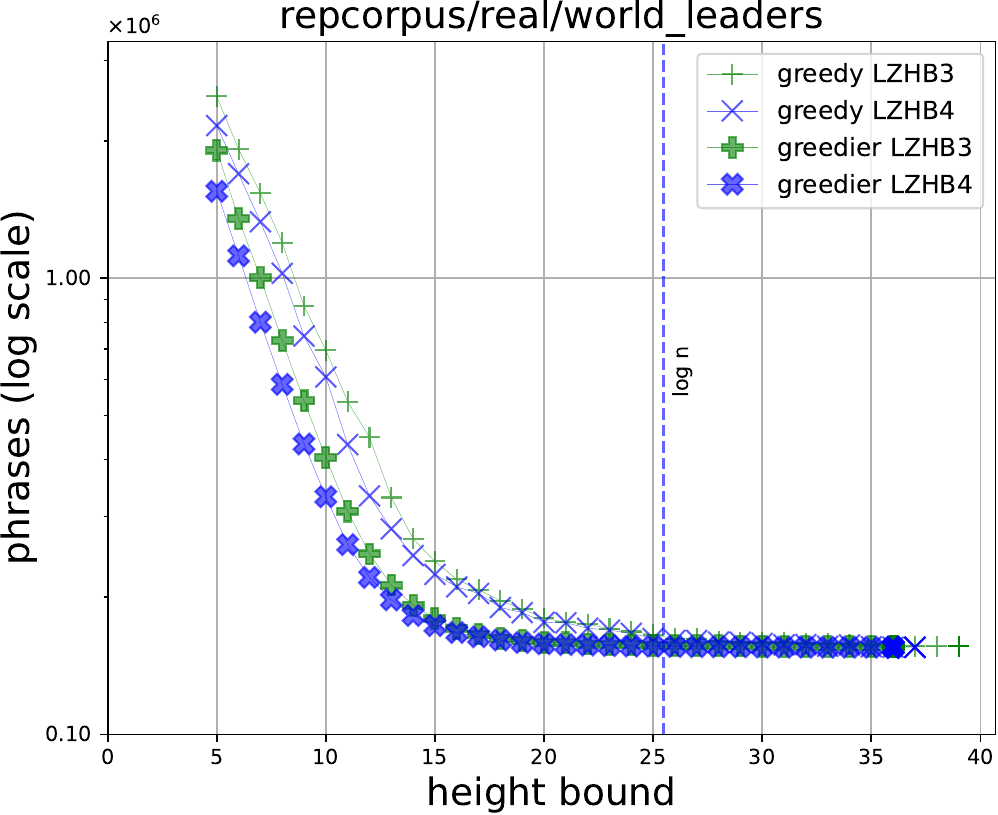}\caption{\# of phrases of \LZHB{3}, \LZHB{4} and their greedier variants.
        See Fig.~\ref{figure:BATLZ-size-comparison-full}~\cref{appendix:results}.}
    \label{figure:BATLZ-size-comparison}
\end{figure}

\section{Discussion}

We have introduced new variants of LZ-like encodings that directly focus on
supporting random access to the underlying data.
We also proposed a modified LZ-like encoding, which naturally connects the run-length encoding
($h=0$), and LZ77 ($h=n$).
We described linear time algorithms for several greedy variants of such encodings, as well as how to adapt
them to support the greedier heuristic.
We showed that our algorithms are faster both theoretically and practically,
compared to contemporaneously proposed algorithms by Lipt\'ak et al.~\cite{liptak2024batlz}.
We also showed some relations between height-bounded encodings
and existing repetitiveness measures.
In particular, we have shown that the optimal LZ-like encoding
with a height constraint of $O(\log n)$,
is one of the asymptotically smallest repetitiveness measures achieving fast (i.e., $O(\polylog (n))$-time) access.

There are numerous avenues future work could take.
An obvious next step is to engineer actual random-access data structures from height-bounded encodings, which our experiments presented here indicate should be competitive with current state-of-the-art methods.
Also, other heuristics for reducing the height while retaining the small size should be explored.

From a theoretical perspective,
a natural open problem is whether we can remove the $\log\sigma$ factor from the running times
of \LZHB{3} and \LZHB{4}.
The biggest open problem we leave is whether better bounds on the size of height-bounded encodings can be established.
Cicalese and Ugazio show that $\lzhbOpt{h}=O(z)$~\cite{cicalese2024complexity} cannot be achieved for costant $h$.
We note that since there exist linear sized data structures that can answer
predecessor queries for a set of values in the range $[1,n]$ in $O(\log \log n)$ time~\cite{DBLP:journals/ipl/Willard83},
even better lower bounds on the height for which $\lzhbOpt{h} = O(z)$ can be shown
by using access time lower bound results of Verbin and Yu~\cite{DBLP:conf/cpm/VerbinY13}.
However, these do not rule out $O(\polylog(n))$ height.
Can we {\em balance} the height (achieve $O(\polylog(n))$ height)
of an LZ-like encoding or a modified LZ-like encoding, by only increasing the size by a constant factor?

\clearpage
\bibliographystyle{plainurl}
\bibliography{refs}

\begin{thebibliography}{10}

\bibitem{DBLP:journals/jda/AbouelhodaKO04}
Mohamed~Ibrahim Abouelhoda, Stefan Kurtz, and Enno Ohlebusch.
\newblock Replacing suffix trees with enhanced suffix arrays.
\newblock {\em J. Discrete Algorithms}, 2(1):53--86, 2004.
\newblock \href {https://doi.org/10.1016/S1570-8667(03)00065-0}
  {\path{doi:10.1016/S1570-8667(03)00065-0}}.

\bibitem{DBLP:conf/cpm/BelazzouguiKPR21}
Djamal Belazzougui, Dmitry Kosolobov, Simon~J. Puglisi, and Rajeev Raman.
\newblock Weighted ancestors in suffix trees revisited.
\newblock In Pawel Gawrychowski and Tatiana Starikovskaya, editors, {\em 32nd
  Annual Symposium on Combinatorial Pattern Matching, {CPM} 2021, July 5-7,
  2021, Wroc{\l}aw, Poland}, volume 191 of {\em LIPIcs}, pages 8:1--8:15.
  Schloss Dagstuhl - Leibniz-Zentrum f{\"{u}}r Informatik, 2021.
\newblock URL: \url{https://doi.org/10.4230/LIPIcs.CPM.2021.8}, \href
  {https://doi.org/10.4230/LIPICS.CPM.2021.8}
  {\path{doi:10.4230/LIPICS.CPM.2021.8}}.

\bibitem{DBLP:journals/tc/BentleyW80}
Jon~Louis Bentley and Derick Wood.
\newblock An optimal worst case algorithm for reporting intersections of
  rectangles.
\newblock {\em {IEEE} Trans. Computers}, 29(7):571--577, 1980.
\newblock \href {https://doi.org/10.1109/TC.1980.1675628}
  {\path{doi:10.1109/TC.1980.1675628}}.

\bibitem{DBLP:conf/mfcs/BerstelS06}
Jean Berstel and Alessandra Savelli.
\newblock Crochemore factorization of sturmian and other infinite words.
\newblock In Rastislav Kralovic and Pawel Urzyczyn, editors, {\em Mathematical
  Foundations of Computer Science 2006, 31st International Symposium, {MFCS}
  2006, Star{\'{a}} Lesn{\'{a}}, Slovakia, August 28-September 1, 2006,
  Proceedings}, volume 4162 of {\em Lecture Notes in Computer Science}, pages
  157--166. Springer, 2006.
\newblock \href {https://doi.org/10.1007/11821069\_14}
  {\path{doi:10.1007/11821069\_14}}.

\bibitem{DBLP:journals/jda/BilleGGP18}
Philip Bille, Travis Gagie, Inge~Li G{\o}rtz, and Nicola Prezza.
\newblock A separation between {RLSLP}s and {LZ77}.
\newblock {\em J. Discrete Algorithms}, 50:36--39, 2018.
\newblock URL: \url{https://doi.org/10.1016/j.jda.2018.09.002}, \href
  {https://doi.org/10.1016/J.JDA.2018.09.002}
  {\path{doi:10.1016/J.JDA.2018.09.002}}.

\bibitem{DBLP:journals/tit/CharikarLLPPSS05}
Moses Charikar, Eric Lehman, Ding Liu, Rina Panigrahy, Manoj Prabhakaran, Amit
  Sahai, and Abhi Shelat.
\newblock The smallest grammar problem.
\newblock {\em {IEEE} Trans. Inf. Theory}, 51(7):2554--2576, 2005.
\newblock \href {https://doi.org/10.1109/TIT.2005.850116}
  {\path{doi:10.1109/TIT.2005.850116}}.

\bibitem{DBLP:journals/siamcomp/Chazelle88}
Bernard Chazelle.
\newblock A functional approach to data structures and its use in
  multidimensional searching.
\newblock {\em {SIAM} J. Comput.}, 17(3):427--462, 1988.
\newblock \href {https://doi.org/10.1137/0217026} {\path{doi:10.1137/0217026}}.

\bibitem{cicalese2024complexity}
Ferdinando Cicalese and Francesca Ugazio.
\newblock On the complexity and approximability of bounded access {Lempel}
  {Ziv} coding.
\newblock {\em CoRR}, abs/2403.15871, 2024.
\newblock URL: \url{https://doi.org/10.48550/arXiv.2403.15871}, \href
  {https://arxiv.org/abs/2403.15871} {\path{arXiv:2403.15871}}, \href
  {https://doi.org/10.48550/ARXIV.2403.15871}
  {\path{doi:10.48550/ARXIV.2403.15871}}.

\bibitem{DBLP:journals/siamdm/ConstantinescuI07}
Sorin Constantinescu and Lucian Ilie.
\newblock The {Lempel--Ziv} complexity of fixed points of morphisms.
\newblock {\em {SIAM} J. Discret. Math.}, 21(2):466--481, 2007.
\newblock \href {https://doi.org/10.1137/050646846}
  {\path{doi:10.1137/050646846}}.

\bibitem{DBLP:journals/ejc/CrochemoreIIKRW13}
Maxime Crochemore, Lucian Ilie, Costas~S. Iliopoulos, Marcin Kubica, Wojciech
  Rytter, and Tomasz Walen.
\newblock Computing the longest previous factor.
\newblock {\em Eur. J. Comb.}, 34(1):15--26, 2013.
\newblock URL: \url{https://doi.org/10.1016/j.ejc.2012.07.011}, \href
  {https://doi.org/10.1016/J.EJC.2012.07.011}
  {\path{doi:10.1016/J.EJC.2012.07.011}}.

\bibitem{DBLP:conf/focs/Farach97}
Martin Farach.
\newblock Optimal suffix tree construction with large alphabets.
\newblock In {\em 38th Annual Symposium on Foundations of Computer Science,
  {FOCS} '97, Miami Beach, Florida, USA, October 19-22, 1997}, pages 137--143.
  {IEEE} Computer Society, 1997.
\newblock \href {https://doi.org/10.1109/SFCS.1997.646102}
  {\path{doi:10.1109/SFCS.1997.646102}}.

\bibitem{DBLP:conf/wads/Fischer11}
Johannes Fischer.
\newblock Inducing the lcp-array.
\newblock In Frank Dehne, John Iacono, and J{\"{o}}rg{-}R{\"{u}}diger Sack,
  editors, {\em Algorithms and Data Structures - 12th International Symposium,
  {WADS} 2011, New York, NY, USA, August 15-17, 2011. Proceedings}, volume 6844
  of {\em Lecture Notes in Computer Science}, pages 374--385. Springer, 2011.
\newblock \href {https://doi.org/10.1007/978-3-642-22300-6\_32}
  {\path{doi:10.1007/978-3-642-22300-6\_32}}.

\bibitem{gusfield_1997}
Dan Gusfield.
\newblock {\em Algorithms on Strings, Trees, and Sequences: Computer Science
  and Computational Biology}.
\newblock Cambridge University Press, 1997.
\newblock \href {https://doi.org/10.1017/CBO9780511574931}
  {\path{doi:10.1017/CBO9780511574931}}.

\bibitem{DBLP:conf/soda/KempaS22}
Dominik Kempa and Barna Saha.
\newblock An upper bound and linear-space queries on the {LZ}-end parsing.
\newblock In Joseph~(Seffi) Naor and Niv Buchbinder, editors, {\em Proceedings
  of the 2022 {ACM-SIAM} Symposium on Discrete Algorithms, {SODA} 2022, Virtual
  Conference / Alexandria, VA, USA, January 9 - 12, 2022}, pages 2847--2866.
  {SIAM}, 2022.
\newblock \href {https://doi.org/10.1137/1.9781611977073.111}
  {\path{doi:10.1137/1.9781611977073.111}}.

\bibitem{DBLP:journals/siamcomp/KnuthMP77}
Donald~E. Knuth, James H.~Morris Jr., and Vaughan~R. Pratt.
\newblock Fast pattern matching in strings.
\newblock {\em {SIAM} J. Comput.}, 6(2):323--350, 1977.
\newblock \href {https://doi.org/10.1137/0206024} {\path{doi:10.1137/0206024}}.

\bibitem{DBLP:conf/dcc/KreftN10}
Sebastian Kreft and Gonzalo Navarro.
\newblock {LZ77}-like compression with fast random access.
\newblock In James~A. Storer and Michael~W. Marcellin, editors, {\em 2010 Data
  Compression Conference {(DCC} 2010), 24-26 March 2010, Snowbird, UT, {USA}},
  pages 239--248. {IEEE} Computer Society, 2010.
\newblock \href {https://doi.org/10.1109/DCC.2010.29}
  {\path{doi:10.1109/DCC.2010.29}}.

\bibitem{DBLP:journals/tcs/KreftN13}
Sebastian Kreft and Gonzalo Navarro.
\newblock On compressing and indexing repetitive sequences.
\newblock {\em Theor. Comput. Sci.}, 483:115--133, 2013.
\newblock URL: \url{https://doi.org/10.1016/j.tcs.2012.02.006}, \href
  {https://doi.org/10.1016/J.TCS.2012.02.006}
  {\path{doi:10.1016/J.TCS.2012.02.006}}.

\bibitem{liptak2024batlz}
Zsuzsanna Lipt{\'{a}}k, Francesco Masillo, and Gonzalo Navarro.
\newblock {BAT-LZ} out of hell.
\newblock {\em CoRR}, abs/2403.09893, 2024.
\newblock To appear at CPM '24.
\newblock URL: \url{https://doi.org/10.48550/arXiv.2403.09893}, \href
  {https://arxiv.org/abs/2403.09893} {\path{arXiv:2403.09893}}, \href
  {https://doi.org/10.48550/ARXIV.2403.09893}
  {\path{doi:10.48550/ARXIV.2403.09893}}.

\bibitem{DBLP:conf/soda/ManberM90}
Udi Manber and Gene Myers.
\newblock Suffix arrays: {A} new method for on-line string searches.
\newblock In David~S. Johnson, editor, {\em Proceedings of the First Annual
  {ACM-SIAM} Symposium on Discrete Algorithms, 22-24 January 1990, San
  Francisco, California, {USA}}, pages 319--327. {SIAM}, 1990.
\newblock URL: \url{http://dl.acm.org/citation.cfm?id=320176.320218}.

\bibitem{DBLP:conf/soda/Muthukrishnan02}
S.~Muthukrishnan.
\newblock Efficient algorithms for document retrieval problems.
\newblock In David Eppstein, editor, {\em Proceedings of the Thirteenth Annual
  {ACM-SIAM} Symposium on Discrete Algorithms, January 6-8, 2002, San
  Francisco, CA, {USA}}, pages 657--666. {ACM/SIAM}, 2002.
\newblock URL: \url{http://dl.acm.org/citation.cfm?id=545381.545469}.

\bibitem{NRacmcs20}
G.~Navarro and J.~Rojas-Ledesma.
\newblock Predecessor search.
\newblock {\em ACM Computing Surveys}, 53(5):article 105, 2020.

\bibitem{Navarro-book}
Gonzalo Navarro.
\newblock {\em Compact Data Structures - {A} Practical Approach}.
\newblock Cambridge University Press, 2016.

\bibitem{DBLP:journals/csur/Navarro21a}
Gonzalo Navarro.
\newblock Indexing highly repetitive string collections, part {I:}
  repetitiveness measures.
\newblock {\em {ACM} Comput. Surv.}, 54(2):29:1--29:31, 2022.
\newblock \href {https://doi.org/10.1145/3434399} {\path{doi:10.1145/3434399}}.

\bibitem{DBLP:journals/tit/NavarroOP21}
Gonzalo Navarro, Carlos Ochoa, and Nicola Prezza.
\newblock On the approximation ratio of ordered parsings.
\newblock {\em {IEEE} Trans. Inf. Theory}, 67(2):1008--1026, 2021.
\newblock \href {https://doi.org/10.1109/TIT.2020.3042746}
  {\path{doi:10.1109/TIT.2020.3042746}}.

\bibitem{DBLP:conf/spire/NavarroOU22}
Gonzalo Navarro, Francisco Olivares, and Cristian Urbina.
\newblock Balancing run-length straight-line programs.
\newblock In Diego Arroyuelo and Barbara Poblete, editors, {\em String
  Processing and Information Retrieval - 29th International Symposium, {SPIRE}
  2022, Concepci{\'{o}}n, Chile, November 8-10, 2022, Proceedings}, volume
  13617 of {\em Lecture Notes in Computer Science}, pages 117--131. Springer,
  2022.
\newblock \href {https://doi.org/10.1007/978-3-031-20643-6\_9}
  {\path{doi:10.1007/978-3-031-20643-6\_9}}.

\bibitem{NavarroUrbinaLATIN2024}
Gonzalo Navarro and Cristian Urbina.
\newblock Iterated straight-line programs.
\newblock In {\em Proc. LATIN 2024}, Lecture Notes in Computer Science, 2024.

\bibitem{NIIBT16}
Takaaki Nishimoto, Tomohiro I, Shunsuke Inenaga, Hideo Bannai, and Masayuki
  Takeda.
\newblock Fully dynamic data structure for {LCE} queries in compressed space.
\newblock In {\em 41st International Symposium on Mathematical Foundations of
  Computer Science, {MFCS} 2016, August 22-26, 2016 - Krak{\'{o}}w, Poland},
  volume~58 of {\em LIPIcs}, pages 72:1--72:15. Schloss Dagstuhl -
  Leibniz-Zentrum f{\"{u}}r Informatik, 2016.

\bibitem{PR98}
Wojciech Plandowski and Wojciech Rytter.
\newblock Application of {Lempel-Ziv} encodings to the solution of words
  equations.
\newblock In Kim~Guldstrand Larsen, Sven Skyum, and Glynn Winskel, editors,
  {\em Automata, Languages and Programming, 25th International Colloquium,
  ICALP'98, Aalborg, Denmark, July 13-17, 1998, Proceedings}, volume 1443 of
  {\em Lecture Notes in Computer Science}, pages 731--742. Springer, 1998.

\bibitem{DBLP:conf/sofsem/UekiDKMNYBIS17}
Yohei Ueki, Diptarama, Masatoshi Kurihara, Yoshiaki Matsuoka, Kazuyuki
  Narisawa, Ryo Yoshinaka, Hideo Bannai, Shunsuke Inenaga, and Ayumi Shinohara.
\newblock Longest common subsequence in at least k length order-isomorphic
  substrings.
\newblock In Bernhard Steffen, Christel Baier, Mark van~den Brand, Johann Eder,
  Mike Hinchey, and Tiziana Margaria, editors, {\em {SOFSEM} 2017: Theory and
  Practice of Computer Science - 43rd International Conference on Current
  Trends in Theory and Practice of Computer Science, Limerick, Ireland, January
  16-20, 2017, Proceedings}, volume 10139 of {\em Lecture Notes in Computer
  Science}, pages 363--374. Springer, 2017.
\newblock \href {https://doi.org/10.1007/978-3-319-51963-0\_28}
  {\path{doi:10.1007/978-3-319-51963-0\_28}}.

\bibitem{DBLP:journals/algorithmica/Ukkonen95}
Esko Ukkonen.
\newblock On-line construction of suffix trees.
\newblock {\em Algorithmica}, 14(3):249--260, 1995.
\newblock \href {https://doi.org/10.1007/BF01206331}
  {\path{doi:10.1007/BF01206331}}.

\bibitem{DBLP:conf/cpm/VerbinY13}
Elad Verbin and Wei Yu.
\newblock Data structure lower bounds on random access to grammar-compressed
  strings.
\newblock In Johannes Fischer and Peter Sanders, editors, {\em Combinatorial
  Pattern Matching, 24th Annual Symposium, {CPM} 2013, Bad Herrenalb, Germany,
  June 17-19, 2013. Proceedings}, volume 7922 of {\em Lecture Notes in Computer
  Science}, pages 247--258. Springer, 2013.
\newblock \href {https://doi.org/10.1007/978-3-642-38905-4\_24}
  {\path{doi:10.1007/978-3-642-38905-4\_24}}.

\bibitem{Weiner73}
Peter Weiner.
\newblock Linear pattern matching algorithms.
\newblock In {\em 14th Annual Symposium on Switching and Automata Theory, Iowa
  City, Iowa, USA, October 15-17, 1973}, pages 1--11. {IEEE} Computer Society,
  1973.

\bibitem{DBLP:journals/ipl/Willard83}
Dan~E. Willard.
\newblock Log-logarithmic worst-case range queries are possible in space
  theta(n).
\newblock {\em Inf. Process. Lett.}, 17(2):81--84, 1983.
\newblock \href {https://doi.org/10.1016/0020-0190(83)90075-3}
  {\path{doi:10.1016/0020-0190(83)90075-3}}.

\bibitem{DBLP:journals/tit/ZivL77}
Jacob Ziv and Abraham Lempel.
\newblock A universal algorithm for sequential data compression.
\newblock {\em {IEEE} Trans. Inf. Theory}, 23(3):337--343, 1977.
\newblock \href {https://doi.org/10.1109/TIT.1977.1055714}
  {\path{doi:10.1109/TIT.1977.1055714}}.

\end{thebibliography}

\clearpage
\appendix

\section{Ommitted Proofs}\label{appendix:claims}
\begin{observation}\label{observation:linear-height}
    The height of the optimal LZ-like encoding (LZ77) can be $\Theta(n)$.
\end{observation}
\begin{proof}
    Consider the height of $\mathtt{b}$ in the string: \[\mathtt{ababcdbc\#dbefbe\#fbghbg\#\dots}\#
        c_k\mathtt{b}c_{k+1}c_{k+2}\mathtt{b}c_{k+1}\#
        c_{k+2}\mathtt{b}c_{k+3}c_{k+4}\mathtt{b}c_{k+3}\#\dots\]
    It is not difficult to see each $\mathtt{b}$ will reference the closest previously occurring $\mathtt{b}$.
\end{proof}

\lowerboundGrl*
\begin{proof}
    It is known that a given RLSLP can be {\em balanced}, i.e., transformed into another RLSLP whose parse tree height is $O(\log n)$, without changing the asymptotic size of the RLSLP~\cite{DBLP:conf/spire/NavarroOU22}.

    \begin{theorem}[Theorem~1 in~\cite{DBLP:conf/spire/NavarroOU22}]\label{theorem:balancing_rlg}
        Given an RLSLP $G$ generating a string $w$, it is possible to construct an equivalent balanced RLSLP $G'$ of size $O(|G|)$, in linear time, with only rules of the form $A\rightarrow a$,$A\rightarrow BC$, and $A\rightarrow B^t$, where $a$ is a terminal, $B$ and $C$ are variables, and $t > 2$.
    \end{theorem}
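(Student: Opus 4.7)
The plan is to adapt the balancing strategy for ordinary SLPs, due to Ganardi, Je\.z, and Lohrey, so as to handle run-length rules while preserving the three allowed rule forms. First I would normalize $G$ so that every rule has the form $A \rightarrow a$, $A \rightarrow BC$, or $A \rightarrow B^t$ with $t \geq 2$. Long right-hand sides of ordinary rules are replaced by binary chains, and run-length right-hand sides are kept as is. This transformation is standard, runs in linear time, and increases $|G|$ only by a constant factor.

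Next I would perform a heavy-path decomposition on the derivation DAG of $G$. For each nonterminal $A$, let $\mathit{exp}(A)$ denote the length of the string it derives. For a binary rule $A \rightarrow BC$, declare the child with larger $\mathit{exp}$ to be heavy; for a run-length rule $A \rightarrow B^t$, the unique child $B$ is heavy. A heavy path is a maximal chain obtained by always descending to the heavy child. The classical property holds: crossing a light edge strictly more than halves $\mathit{exp}$, so any root-to-leaf traversal of the DAG crosses $O(\log n)$ light edges.

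The core construction replaces each heavy path $v_0, v_1, \ldots, v_k$ by a balanced subgrammar. Along this path, $\mathit{exp}(v_0)$ decomposes as an ordered concatenation of pieces contributed by each step: for a binary rule $v_i \rightarrow BC$ where the heavy child is (say) $C = v_{i+1}$, the piece is $\mathit{exp}(B)$ to the left of $\mathit{exp}(v_{i+1})$; for a run-length rule $v_i \rightarrow v_{i+1}^t$, the contribution is a prefix of $t_1$ copies of $v_{i+1}$ and a suffix of $t_2$ copies with $t_1 + 1 + t_2 = t$, which I would keep as two fresh run-length nonterminals $B^{t_1}$ and $B^{t_2}$. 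Finally, $\mathit{exp}(v_k)$ sits in its chosen position. I would then build a weight-balanced binary tree over this list of pieces using fresh nonterminals with $A \rightarrow BC$ rules, so that the subtree height is $O(\log(\mathit{exp}(v_0)))$.

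The two analyses I then have to carry out are a size bound and a height bound. For size, each nonterminal of $G$ lies on a unique heavy path; the total number of pieces across all heavy paths is proportional to the number of rules, i.e., $O(|G|)$; and weight-balancing adds only $O(1)$ fresh nonterminals per piece, giving $O(|G|)$ overall. For height, I would combine the $O(\log n)$ bound on light-edge crossings with the $O(\log n)$ local depth of the balanced tree built on each heavy path. The main obstacle, and the subtle point where run-length rules must be treated with care, is to verify that the two run-length fragments $B^{t_1}$, $B^{t_2}$ arising from a rule $v_i \rightarrow v_{i+1}^t$ do not inflate the local piece count beyond a constant per step, and that switching from the balanced tree of a heavy path into a run-length child $B$ still loses at least a factor of two in expansion length (or else $B$ itself should have been the heavy child), so the global height recurrence $h(n) \leq h(n/2) + O(1)$ continues to close. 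Running the whole construction bottom-up on the DAG yields the linear-time claim.
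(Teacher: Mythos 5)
First, a point of context: the paper you are working from does not prove this statement at all --- it is imported verbatim as Theorem~1 of Navarro, Olivares and Urbina (SPIRE 2022), so the comparison has to be made against that reference. Your overall strategy (adapt the Ganardi--Je\.z--Lohrey SLP balancing, treat run-length rules as automatically expansion-halving steps, and check that the weighted-tree recurrence still closes) is exactly the strategy of that proof, and your observation that descending from a fragment $B^{t_1}$ into $B$ either divides the expansion by $t_1\ge 2$ or is vacuous when $t_1\le 1$ is the right way to close the height recurrence (together with the normalization that rewrites $A\rightarrow B^2$ as the binary rule $A\rightarrow BB$, which is why the output only needs exponents $t>2$).

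The genuine gap is in your path decomposition. You define heaviness only from the parent's side (``declare the child with larger $\mathit{exp}$ to be heavy'') and then assert that ``each nonterminal of $G$ lies on a unique heavy path.'' On a derivation \emph{DAG} this is false: a nonterminal $C$ can be the heavy child of many distinct parents, so the maximal chains obtained by repeatedly descending to heavy children are not vertex-disjoint --- they merge and share suffixes. With overlapping paths, the piece count is no longer $O(|G|)$ (the same pieces get charged to several paths), and, worse, the construction would have to build several conflicting balanced trees over the same nonterminal. The device that makes GJL work in linear size on a DAG is the \emph{symmetric} centroid decomposition: an edge $(A,B)$ is retained only when $B$ is the heavy child of $A$ \emph{and} $A$ is the heavy parent of $B$, which forces both in- and out-degree at most one and genuinely partitions the vertices into paths, while still guaranteeing that every discarded edge loses a constant factor in an appropriate weight. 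Relatedly, a generic weight-balanced binary tree over the list of pieces is not quite enough: every intermediate node $v_i$ of a heavy path may be referenced from elsewhere in the DAG, so the new grammar must contain a nonterminal deriving exactly the nested subinterval of pieces corresponding to $\mathit{exp}(v_i)$; this is what GJL's weighted prefix/suffix grammar lemma provides. Both issues are repaired by invoking GJL's actual machinery rather than the tree-style sketch you wrote, after which your run-length bookkeeping goes through.
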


    Given an RLSLP with parse tree height $h$, a partial parse tree of the RLSLP, as defined in Theorem 5 of~\cite{DBLP:journals/tit/NavarroOP21}, can be obtained.
    Based on this partial parse tree, it is possible to obtain an LZ-like encoding of size at most that of the run-length grammar.
    The height of the LZ-like encoding corresponds to the height of the parse tree, and thus is $h$-bounded.
    Since, from Theorem~\ref{theorem:balancing_rlg}, we can obtain a grammar of size $O(\rlgOpt)$ with height $O(\log n)$,
    we have that the size $\lzhbOpt{c\log n}$ of an optimal $c\log n$-bounded LZ-like encoding
    will have size at most $O(\rlgOpt)$,
    where $c$ is a constant incurred by the height of the balanced RLSLP obtained from balancing the optimal run-length grammar.
\end{proof}

\lowerboundGrlAndGit*
\begin{proof}
    The proof follows similar arguments to those of Bille et al.~\cite{DBLP:journals/jda/BilleGGP18}.
    In their proof, they consider a string (originally used by Charikar et al.~\cite{DBLP:journals/tit/CharikarLLPPSS05})
    \[\hat{s} = t(k_1) \#_1 t(k_2) \#_2 \cdots \#_{p-1} t(k_p)\]
    where $t(k)$ denotes the length-$k$ prefix of the Thue-Morse word,
    $\#_k$ are distinct symbols, and
    $k_1, k_2, \ldots, k_p$ is an integer sequence where
    $k_1$ is the largest of the $k_i$, $p = \Theta(\log k_1)$,
    and show that the integers $k_i$ can be chosen so that the smallest RLSLP for $\hat{s}$ has size $\Omega(\frac{\log^2{k_1}}{\log\log k_1})$.
    They further observe that the size of an LZ77 parsing of the length-$k$ prefix of the Thue-Morse word is $O(\log k)$~\cite{DBLP:conf/mfcs/BerstelS06,DBLP:journals/siamdm/ConstantinescuI07}. Since $k_1$ is the largest of the $k_i$'s,
    the size of the LZ77 parsing of $\hat{s}$ is $O(\log k_1 + p) = O(\log k_1)$ for this class of strings.

    Now, observe that the height, as defined in Equation~(\ref{eq:height:period}), of an LZ-like encoding with size $z'$ is $O(z')$, since any position in a phrase references a position in a previous phrase. Therefore, we have that
    for some $c$, $\lzhbOpt{c\log n} = O(\log k_1)$ as well, thus showing that $\lzhbOpt{c\log n} = o(\rlgOpt)$.

    Since Thue-Morse words are cube-free,
    the extra rules in the iterated SLPs do not improve the asymptotic size compared to RLSLPs,
    and thus, as shown in Lemma 3 of~\cite{NavarroUrbinaLATIN2024},
    $g_{it(d)}=\rlgOpt$ for this family of strings. Therefore, we have
    $\lzhbOpt{c\log n} = o(g_{it(d)})$.
\end{proof}

\twoApproximation*

\begin{proof}
    It is obvious that $\hat{\tilde{z}}\leq \tilde{z}$.
    We first claim $\tilde{z}\leq z$.
    This can be seen from the fact that a phrase that starts at position $i$ in (the greedy) \LZHB{4} is at least as long as $\lpf(i)$,
    and so any LZHB phrase must extend at least to the end of the LZ77 phrase it starts in.
    Thus, for each LZHB phrase, we can assign a distinct LZ77 phrase (the LZ77 phrase it starts in),
    and therefore the number of \LZHB{4} phrases cannot be more than the number of LZ77 phrases.

    Next, we claim $z \leq 2\tilde{z}'$ for any modified LZ-like encoding
    with no height constraints, whose size is denoted by $\tilde{z}'$.
    This can be seen from the fact that a given phrase in a modified LZ-like encoding can contain at most two LZ77 phrases that start in it.
    Let $x^kx'$ be a phrase in the modified LZ-like encoding starting at position $i$, whose minimum period is $|x|$ and $x'$ is a prefix of $x$.
    Since there is a previous occurrence of $x$, the first LZ77 phrase that starts in the  LZHB phrase must extend to at least the end of the first $x$.
    Since there is a previous occurrence of any suffix of $x^{k-1}x'$, the next LZ77 phrase will extend at least to the end of the LZHB phrase.
    The theorem follows from the above arguments, using $\hat{\tilde{z}}$ for $\tilde{z}'$.
\end{proof}

\ignore{
    \heightLowerbound*
    \begin{proof}
        Let $w$ be the word size (in bits),
        $\mathcal{S}$ the size of the data structure (in words),
        $t$ the query time (number of accesses to words),
        and let $g$ be the size of a grammar representation for the string $T$.
        Verbin and Yu showed the following lowerbounds:

        \begin{theorem}[Theorem 4 of~\cite{DBLP:conf/cpm/VerbinY13}]
            Assume $w = \omega(\log \mathcal{S})$. Let $\varepsilon > 0$ be an arbitrarily small constant.
            For any 2-sided-error data structure for the grammar random access problem,
            $t \geq g/{w^{\frac{1+\varepsilon}{1-\varepsilon}}}$. And in terms of $n$, $t \geq \frac{\log n}{\log\mathcal{S}\cdot w^{\frac{\varepsilon}{1-\varepsilon}}}$.

            When setting $w=\log n$ and $\mathcal{S}=\poly(g)$ (polynomial space in the cell-probe model with cells of size $\log n$), there is another constant $\delta=\frac{2\varepsilon}{1-\varepsilon}$ such that $t \geq n^{\frac{1}{2}-\delta}$. And in terms of $n$, $t \geq \log^{1-\delta} n$.
            \todo[inline]{I don't understand the last part. If we choose a trivial grammar of size $O(n)$, we can get $O(1)$ time access. Perhaps the statement means something like:
                There exists strings such that for a representation that uses at most $\poly g$ space for any $g$ that represents the string
                (i.e., at most the smallest grammar size),
                the access time $t$ must be at least $\log^{1-\delta} n$.
            }
        \end{theorem}
        Suppose $h = O(\log^{1-\varepsilon'} n / \log\log n)$ for some $\varepsilon' > 0$.
        Then, since there exist linear sized data structures that can answer
        predecessor queries for a set of values in the range $[1,n]$ in $O(\log \log n)$ time~\cite{DBLP:journals/ipl/Willard83},
        and $\lzhbOpt{h}=O(z)=O(g)$,
        this implies that access queries can be answered in $O(\log^{1-\varepsilon'}n)$ time.
        However, this contradicts that $t \geq \log^{1-\delta} n$
        if we choose $\varepsilon$ so that $\delta < \varepsilon'$.
    \end{proof}

    XXXXXXXXXXXXXXXXXXXXXX

    The proof by Verbin and Yu~\cite{DBLP:conf/cpm/VerbinY13} is a combination of two observations:
    (sorry, the notation below are mostly from their paper. forget all notation in our paper.)
    \begin{enumerate}
        \item Some data structure problem
              (a decision problem - set disjointness : preprocess
              set $P$ and for a query set $Q$, answer if it is disjoint with $P$)
              with parameters $B,N$ can be solved by building a binary string $T_P$ of
              length $B^N$ which has a grammar representation of size $2BN+1$, and accessing it.
        \item A known query time lowerbound for the problem is:
              for any constant $\varepsilon > 0$,
              \[\min\left\{
                  \frac{N\log B}{\log S},
                  \frac{B^{1-\varepsilon}N}{w}
                  \right\}\]
              where $S$ is the size of the data structure, and $w$ is word size.
    \end{enumerate}
    Suppose to the contrary that for any string $X$ of length $L$ where $w = \log L$,
    with smallest grammar size $g$, we can access $X$ in $\log^{1-\delta} L$ time for some $\delta > 0$.
    Since, we could use this to access $T_P$, and the smallest grammar
    size of $T_P$ is at most $2BN+1$,
    the lower bound implies that $\log^{1-\delta} L = (N\log B)^{1-\delta}$ cannot be less than both
    $\frac{N\log B}{\log (2BN+1)}$ or $\frac{B^{1-\varepsilon}N}{N\log B} = \frac{B^{1-\varepsilon}}{\log B}$.
    When $N = \frac{B^{1-\varepsilon}\log (2BN+1)}{\log^2B}$ (basically what is said we should do in their proof),
    the former is
    \[
        \frac{B^{1-\varepsilon}\log (2BN+1)}{\log B\log(2BN+1)} = \frac{B^{1-\varepsilon}}{\log B}
    \]
    so we only need to compare with this. It is difficult because the definitions of the variables are not in closed form.
    Question: does:
    \[(N\log B)^{1-\delta} \geq \frac{B^{1-\varepsilon}}{\log B}\]
    hold for any $\varepsilon$? (We want to say no to get a contradiction).

    \begin{eqnarray}
        (N\log B)^{1-\delta} &=& \left(\frac{B^{1-\varepsilon}\log(2BN+1)}{\log^2 B}\log B\right)^{1-\delta}\\
        &=&  \left(\frac{B^{1-\varepsilon}\log(2BN+1)}{\log B}\right)^{1-\delta}\\
    \end{eqnarray}
}

\clearpage
\section{Pseudo codes}\label{appendix:pseudocode}
\begin{algorithm2e}[htbp]
    \caption{$O(n\log\sigma)$ time algorithm for computing \LZHB{3}}
    \label{alg:lz77h3_selfreferencing2}
    \SetVlineSkip{0.5mm}
    \Fn{$\LZHB{3}(T,h)$}{
    Let $\mathcal{T}$ be a suffix tree of $\varepsilon$\;
    Initialize $H, \mathcal{E}$\;
    $i = 1, n = |T|, \mathit{t} = 0$\;
    \While{$i \le n$}{
    \tcp{$T[i..i+\ell)$ is longest prefix of $T[i..n]$ occurring in $T'[1..i)$,}
    \tcp{and leftmost occurrence is $s$.}
    $(\ell, s) = \mathcal{T}.\mathit{prefq}(T[i..n])$\;

    \If(\tcp*[f]{check for longer self-referencing occurrence}){$\ell > 0$}{
    $k$ = leftmost occurrence of $T[i..i+\ell)$
        in $T[\max(\mathit{t},i-\ell)..i+\ell-1)$\;
        \If{k != \textsf{nil}}{
            \lIf{$T[k+\ell] = T[i+\ell]$}{$s = k$}
            \lWhile{$T[s+\ell] = T[i+\ell]$}{$\ell = \ell + 1$}
        }
        }

        \eIf{$\ell \leq 1$}{
        Add $(1,T[i])$ to $\mathcal{E}$\;
    $H[i] = 0$\;
        \lIf{h > 0}{
            $\mathcal{T}.\mathit{append}(T[i])$
        }
        \lElse{
            $\mathcal{T}.\mathit{append}(\$)$
        }
    $i = i + 1$\;
        }{
    $S = T[i..i+\ell)$\;
        Add $(\ell,s)$ to $\mathcal{E}$\;
        \For{$j = 0$ \textbf{\textup{to}} $\ell-1$}{
            $H[i+j] = H[s+(j\bmod (i - s))]+1$\;
            \If{$H[i+j] = h$}{
                $S[j+1] = \$$\;
                $t = i + j$\;
            }
        }
    $\mathcal{T}.\mathit{append}(S)$\;
    $i = i + \ell$\;
        }
        }
        \textbf{return} $\mathcal{E}$\;
    }
\end{algorithm2e}

\clearpage
\section{More results of computational experiments}\label{appendix:results}

\begin{figure}[htbp]
    \includegraphics[width=0.49\textwidth]{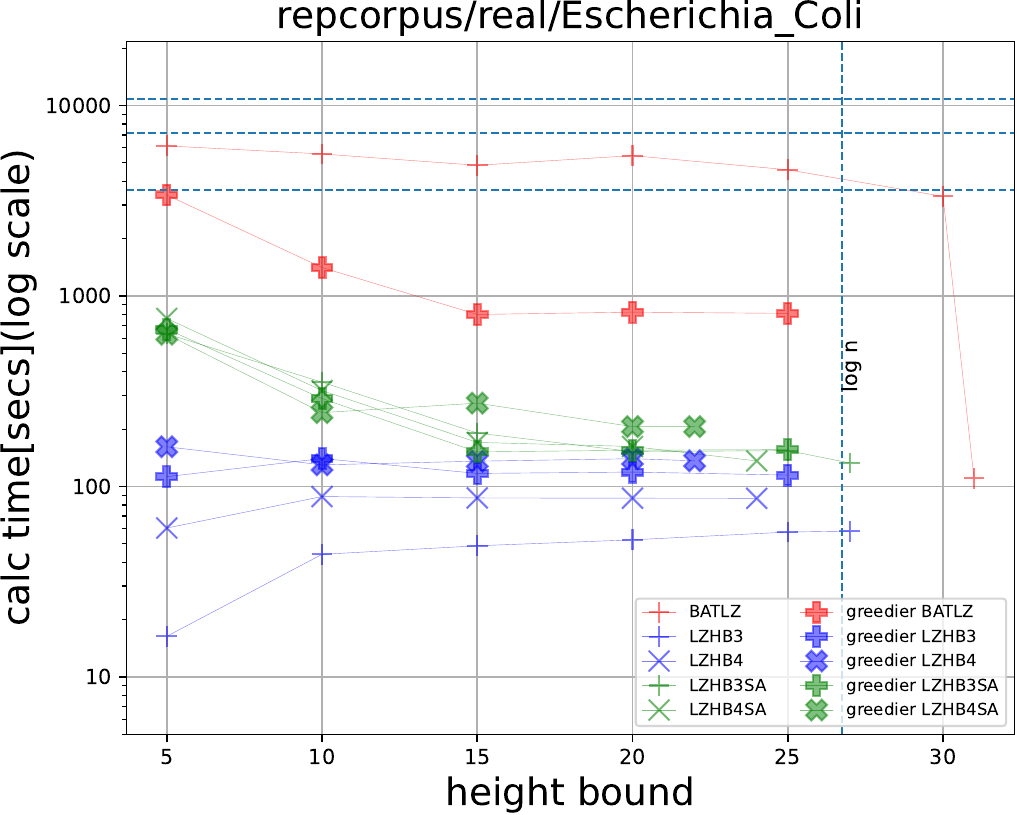}\hfill
    \includegraphics[width=0.49\textwidth]{calc_time_cere.pdf}\\

    \medskip

    \includegraphics[width=0.49\textwidth]{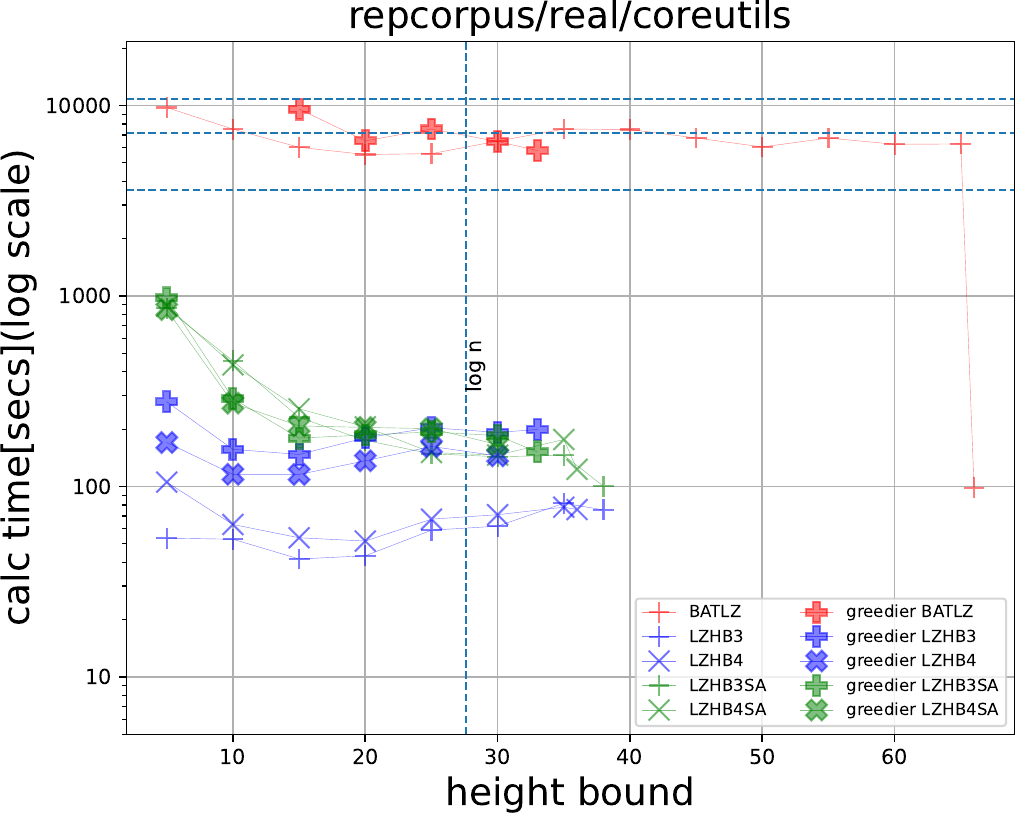}\hfill
    \includegraphics[width=0.49\textwidth]{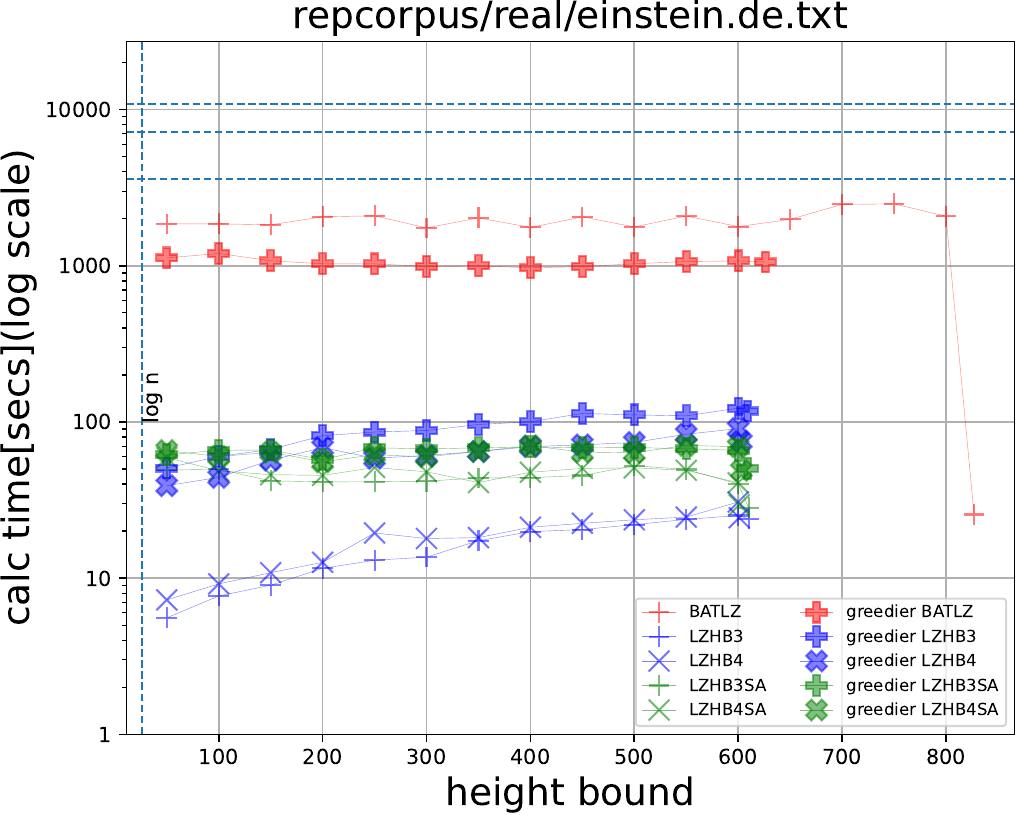}
    \caption{Running times of \textsf{BAT-LZ} and \textsf{LZHB} variants.
        The dotted horizontal lines respectively correspond to 1,2,3 hours.
        For each algorithm the largest height is the height of the parsing obtained with no height constraint, which is always computed.
        The results for smaller height constraints are computed for intervals of $5$, $50$, or $100$, depending on the data.
        Missing plots indicate that the computation exceeded 3 hours, which only happened for the \textsf{BAT-LZ} variants,
        especially prominent in cere, para, and einstein.en.txt.
    }
    \label{figure:BATLZ-time-comparison-full}
\end{figure}

\begin{figure}[htbp]
    \ContinuedFloat
    \includegraphics[width=0.49\textwidth]{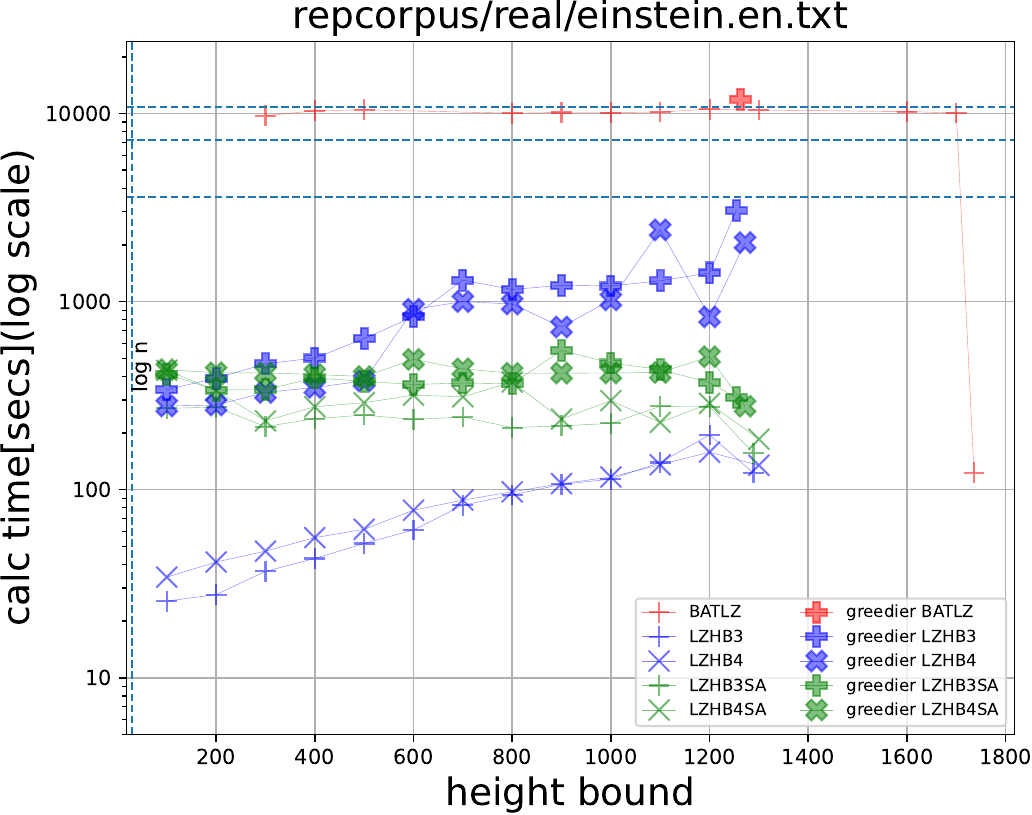}\hfill
    \includegraphics[width=0.49\textwidth]{calc_time_influenza.pdf}\\

    \medskip

    \includegraphics[width=0.49\textwidth]{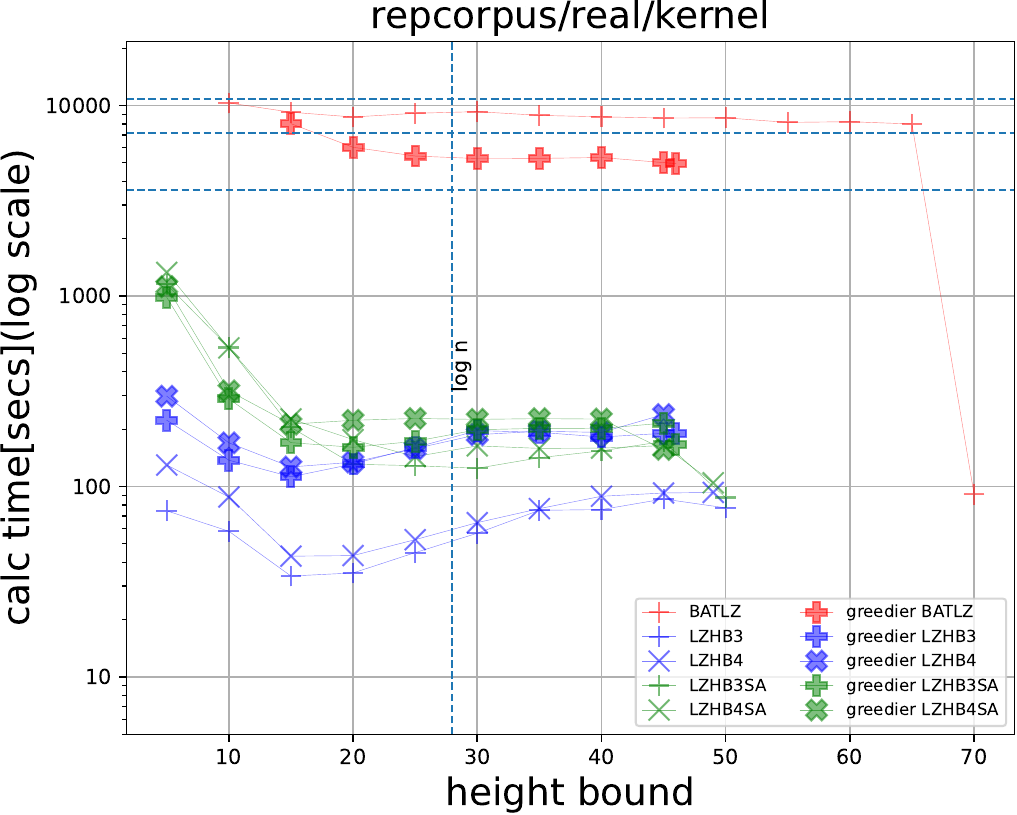}\hfill
    \includegraphics[width=0.49\textwidth]{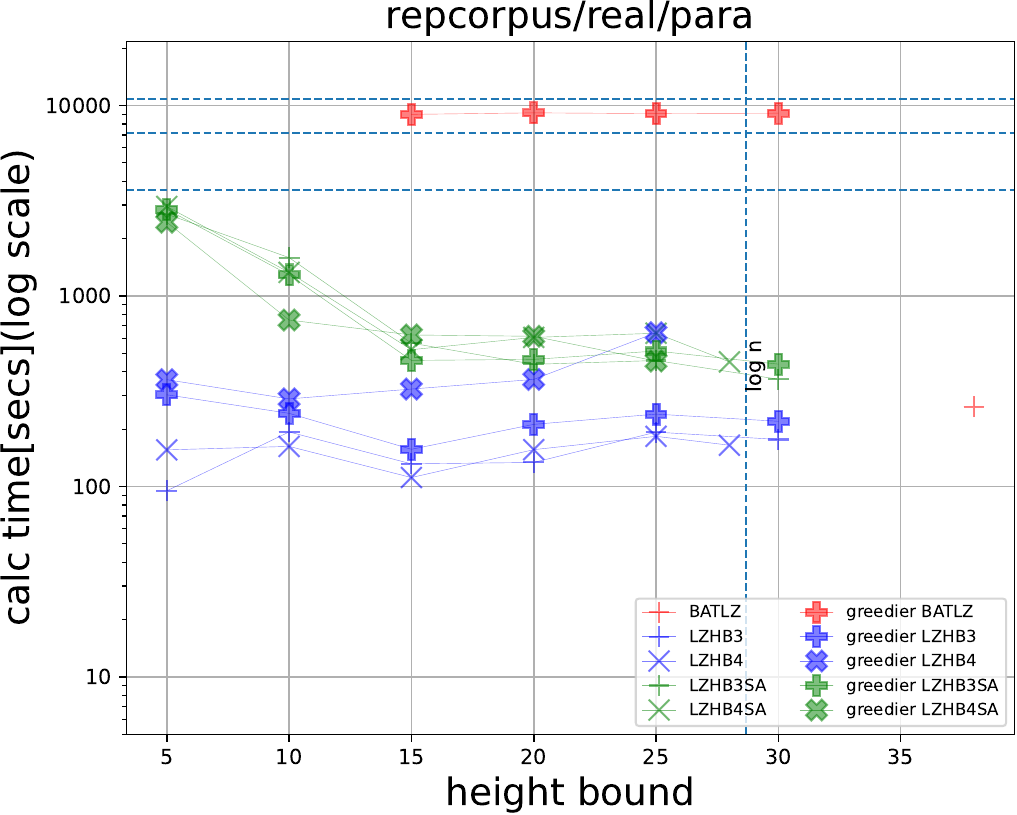}\\

    \medskip

    \includegraphics[width=0.49\textwidth]{calc_time_world_leaders.pdf}
    \caption{(contd.)
        Running times of \textsf{BAT-LZ} and \textsf{LZHB} variants.
        The dotted horizontal lines respectively correspond to 1,2,3 hours.
        For each algorithm the largest height is the height of the parsing obtained with no height constraint, which is always computed.
        The results for smaller height constraints are computed for intervals of $5$, $50$, or $100$, depending on the data.
        Missing plots indicate that the computation exceeded 3 hours, which only happened for the \textsf{BAT-LZ} variants,
        especially prominent in cere, para, and einstein.en.txt.}
\end{figure}

\begin{figure}
    \includegraphics[width=0.49\textwidth]{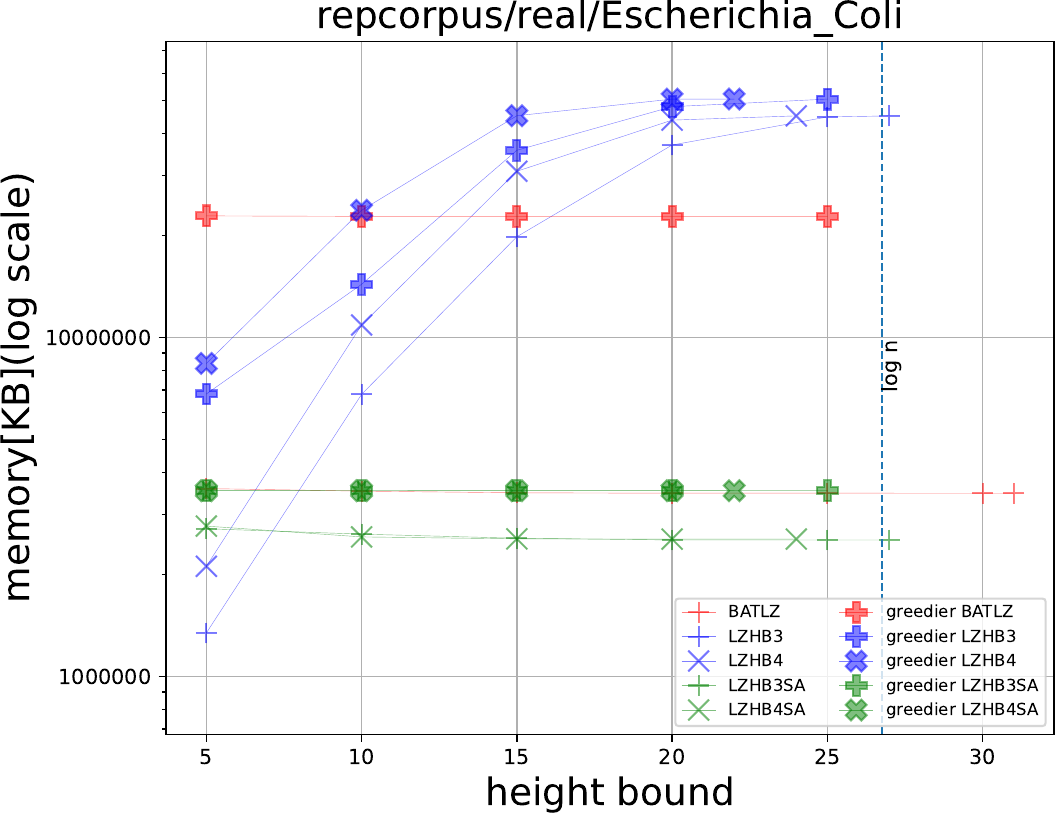}\hfill
    \includegraphics[width=0.49\textwidth]{calc_space_cere.pdf}\\

    \medskip

    \includegraphics[width=0.49\textwidth]{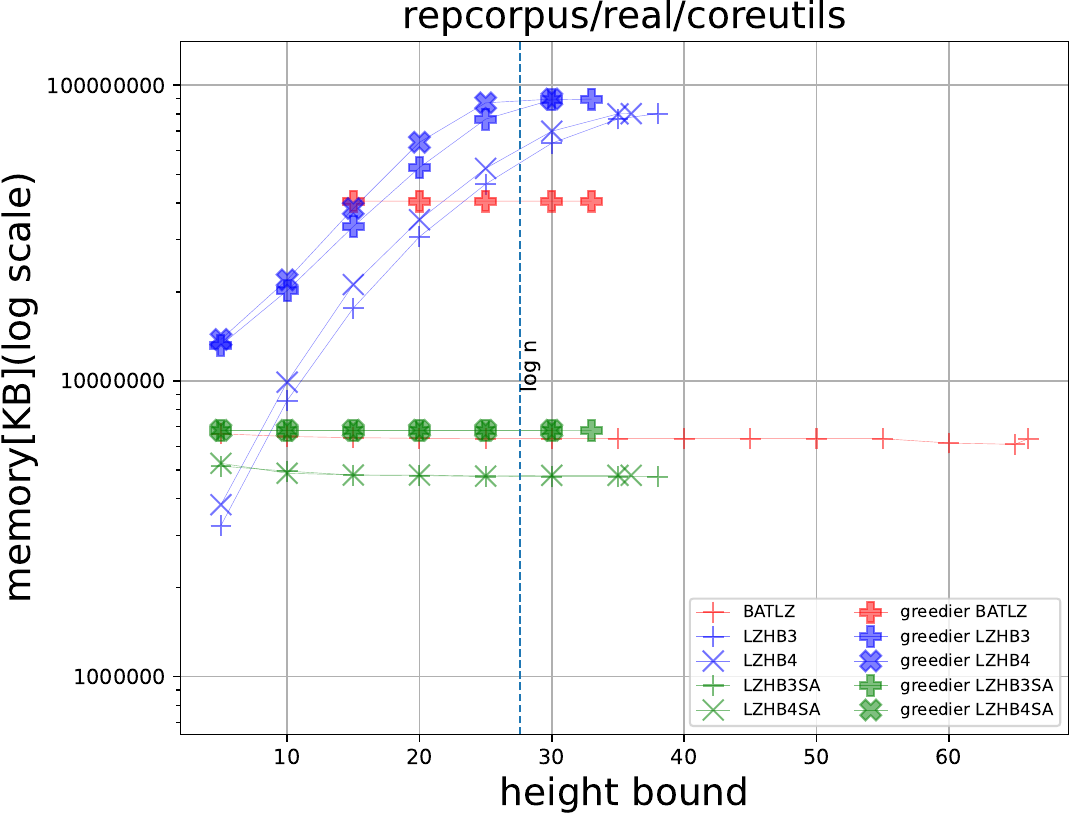}\hfill
    \includegraphics[width=0.49\textwidth]{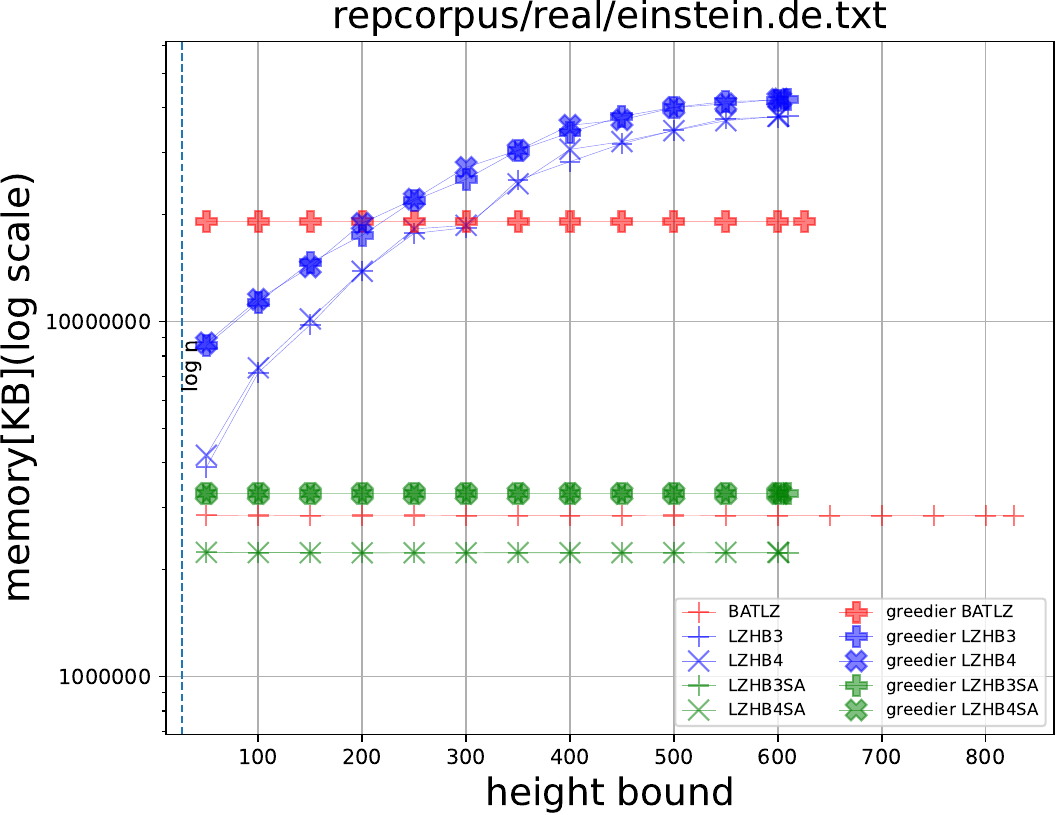}
    \caption{Memory usage of \textsf{BAT-LZ} and \textsf{LZHB} variants.
        For each algorithm the largest height is the height of the parsing obtained with no height constraint, which is always computed.
        nn    The results for smaller height constraints are computed for intervals of $5$, $50$, or $100$, depending on the data.
        Missing plots indicate that the computation exceeded 3 hours, which only happened for the \textsf{BAT-LZ} variants,
        especially prominent in cere, para, and einstein.en.txt.}
    \label{figure:BATLZ-space-comparison-full}
\end{figure}

\begin{figure}
    \ContinuedFloat
    \includegraphics[width=0.49\textwidth]{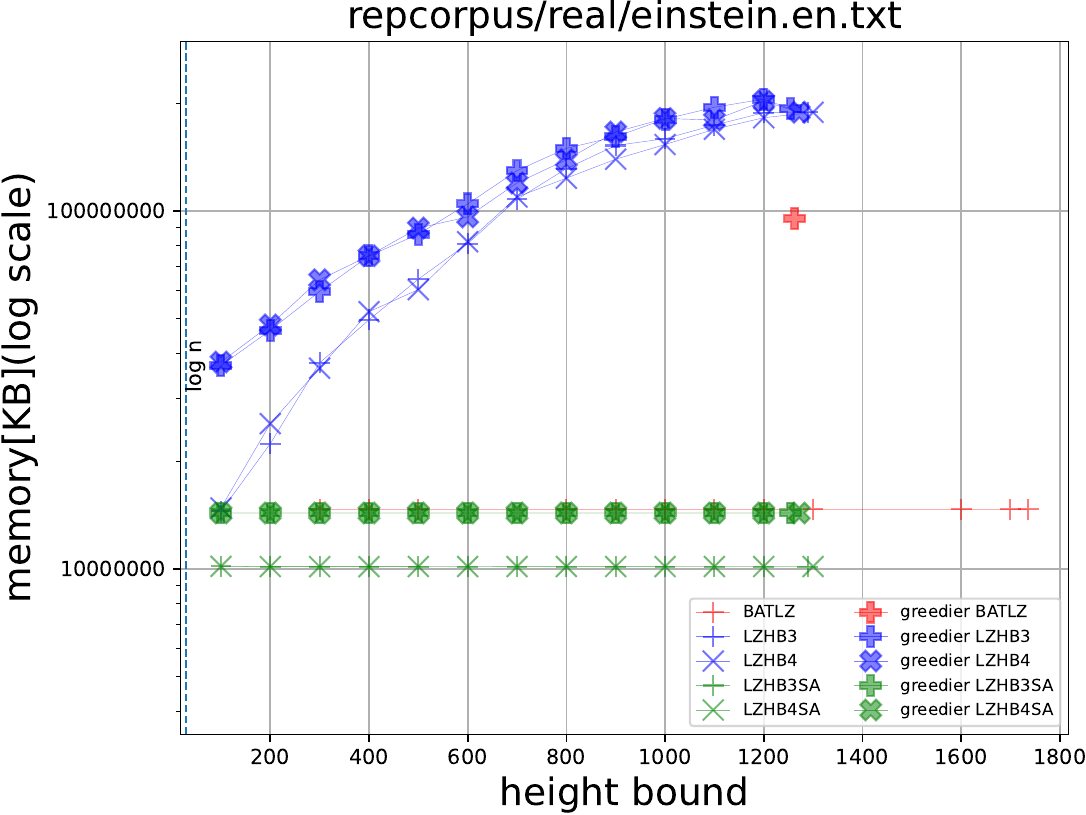}\hfill
    \includegraphics[width=0.49\textwidth]{calc_space_influenza.pdf}\\

    \medskip

    \includegraphics[width=0.49\textwidth]{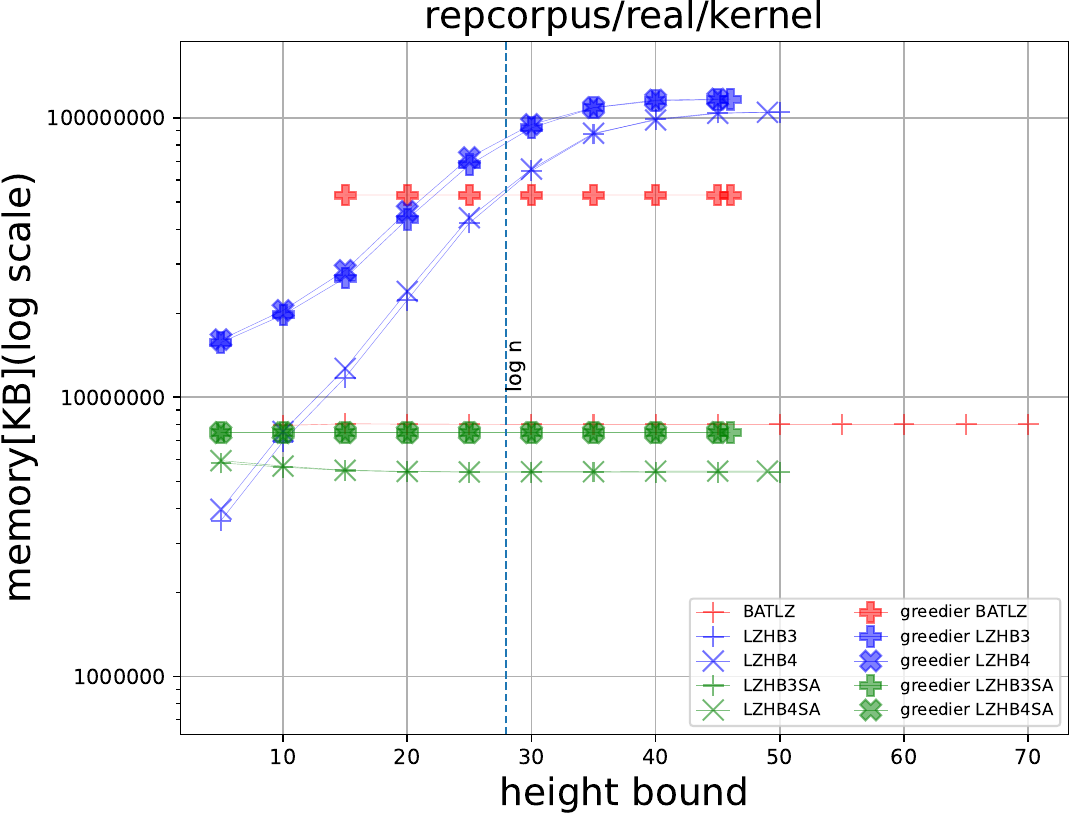}\hfill
    \includegraphics[width=0.49\textwidth]{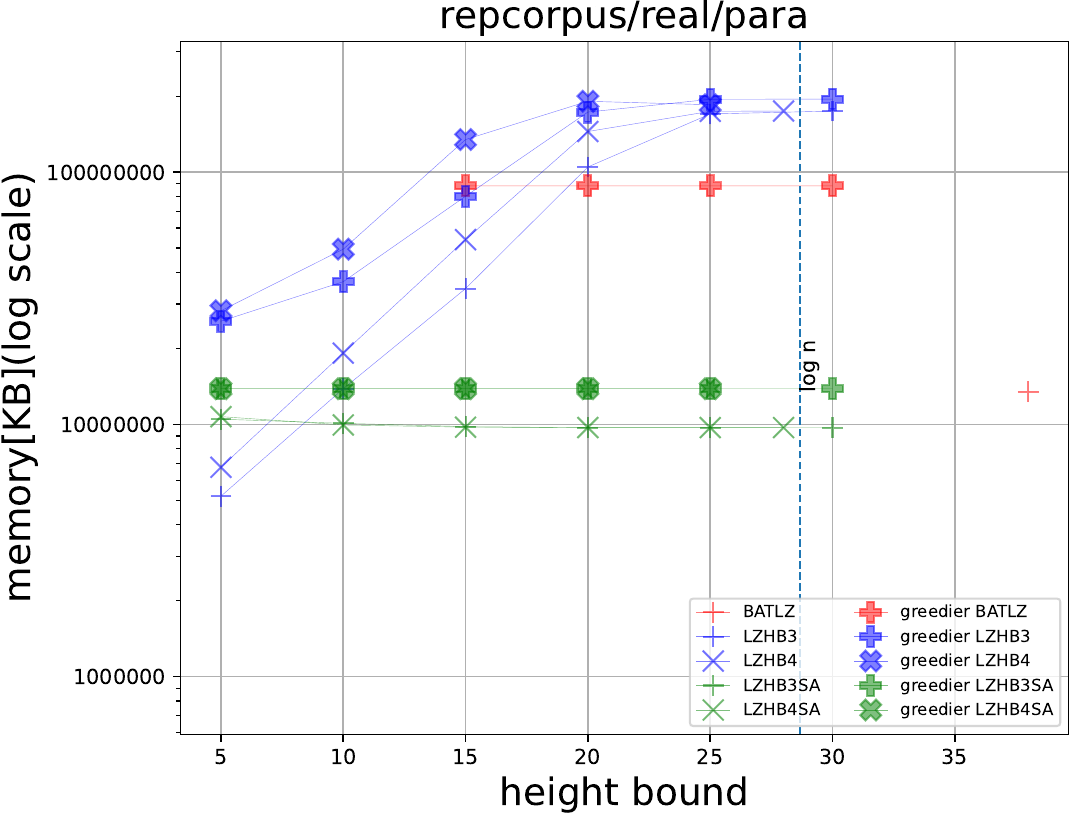}\\

    \medskip

    \includegraphics[width=0.49\textwidth]{calc_space_world_leaders.pdf}
    \caption{(contd.) Memory usage of \textsf{BAT-LZ} and \textsf{LZHB} variants.
        For each algorithm the largest height is the height of the parsing obtained with no height constraint, which is always computed.
        The results for smaller height constraints are computed for intervals of $5$, $50$, or $100$, depending on the data.
        Missing plots indicate that the computation exceeded 3 hours, which only happened for the \textsf{BAT-LZ} variants,
        especially prominent in cere, para, and einstein.en.txt.}
\end{figure}

\begin{figure}
    \includegraphics[width=0.49\textwidth]{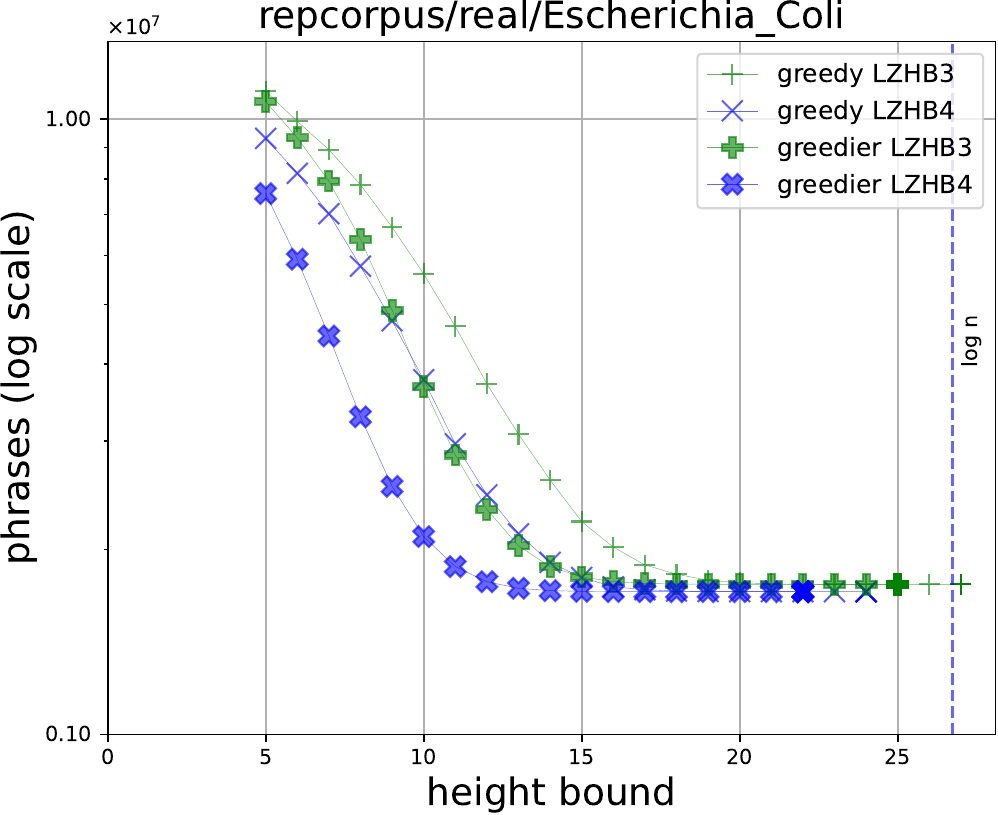}\hfill
    \includegraphics[width=0.49\textwidth]{calc_cmp_size_sa_cere.pdf}\\

    \medskip

    \includegraphics[width=0.49\textwidth]{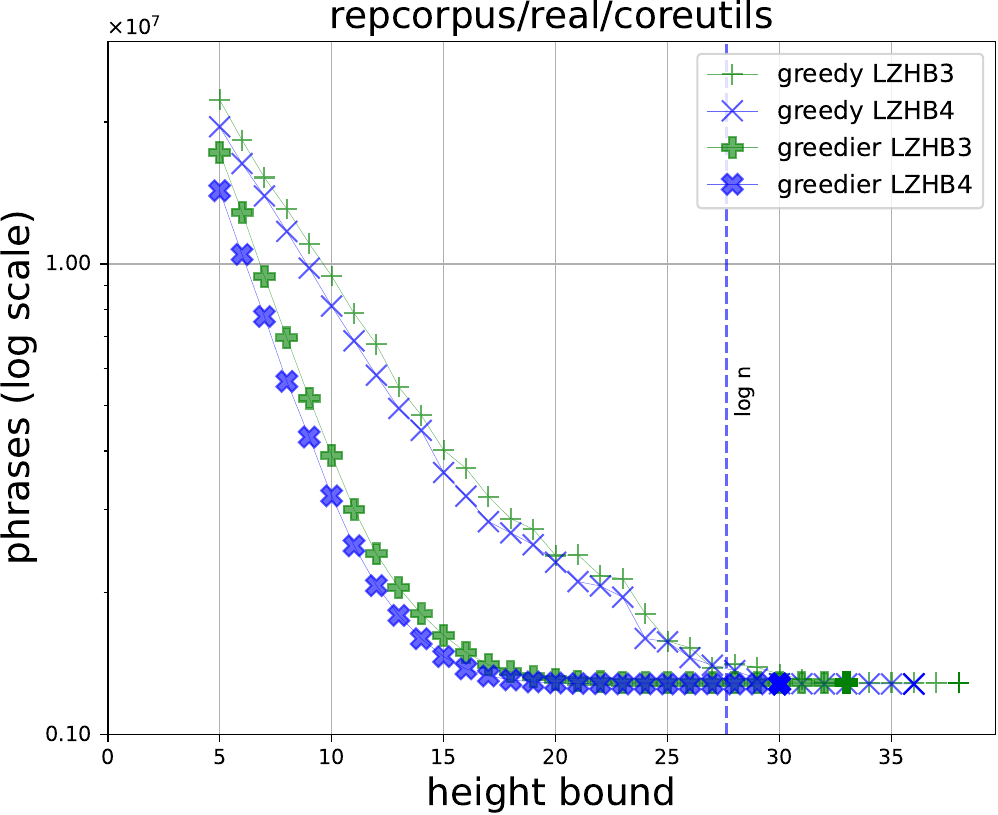}\hfill
    \includegraphics[width=0.49\textwidth]{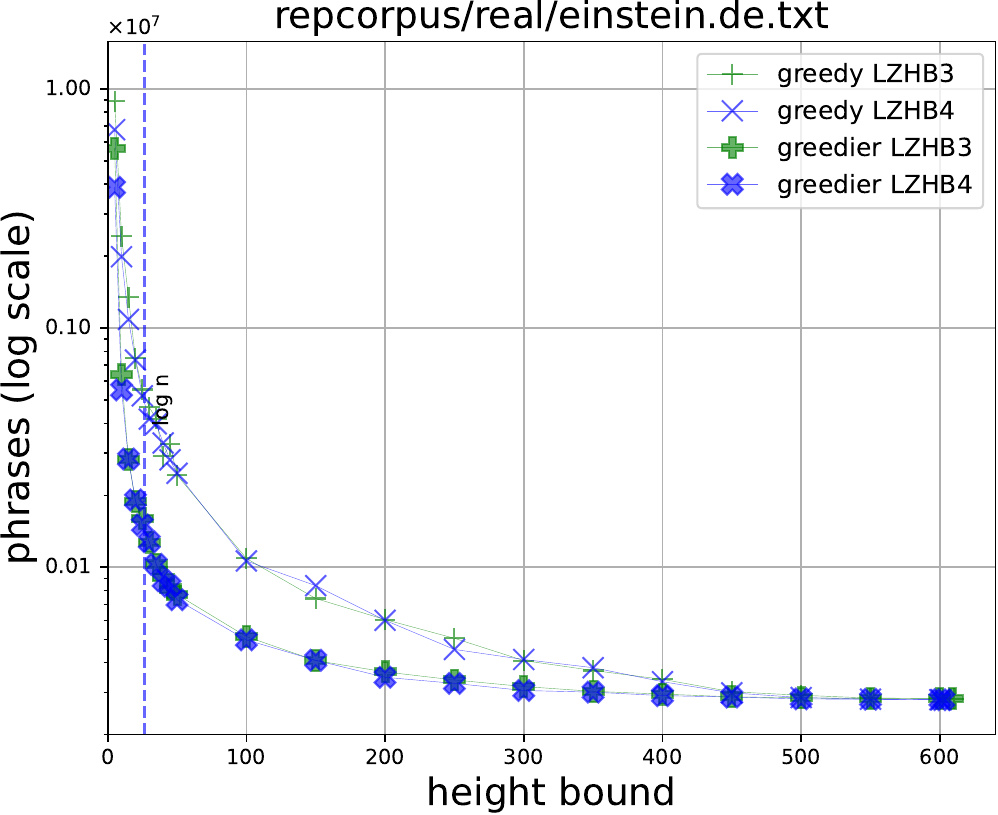}
    \caption{Number of phrases of \LZHB{3} and \LZHB{4} and their greedier variants.}
    \label{figure:BATLZ-size-comparison-full}
\end{figure}
\begin{figure}
    \ContinuedFloat
    \includegraphics[width=0.49\textwidth]{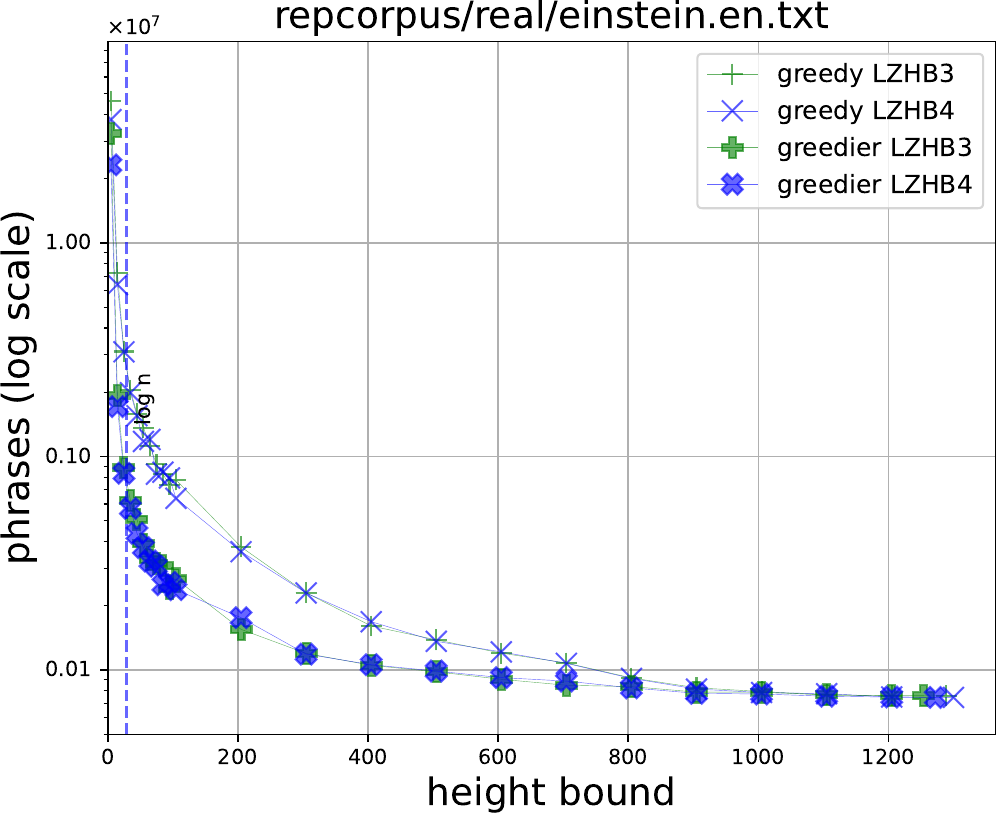}\hfill
    \includegraphics[width=0.49\textwidth]{calc_cmp_size_sa_influenza.pdf}\\

    \medskip

    \includegraphics[width=0.49\textwidth]{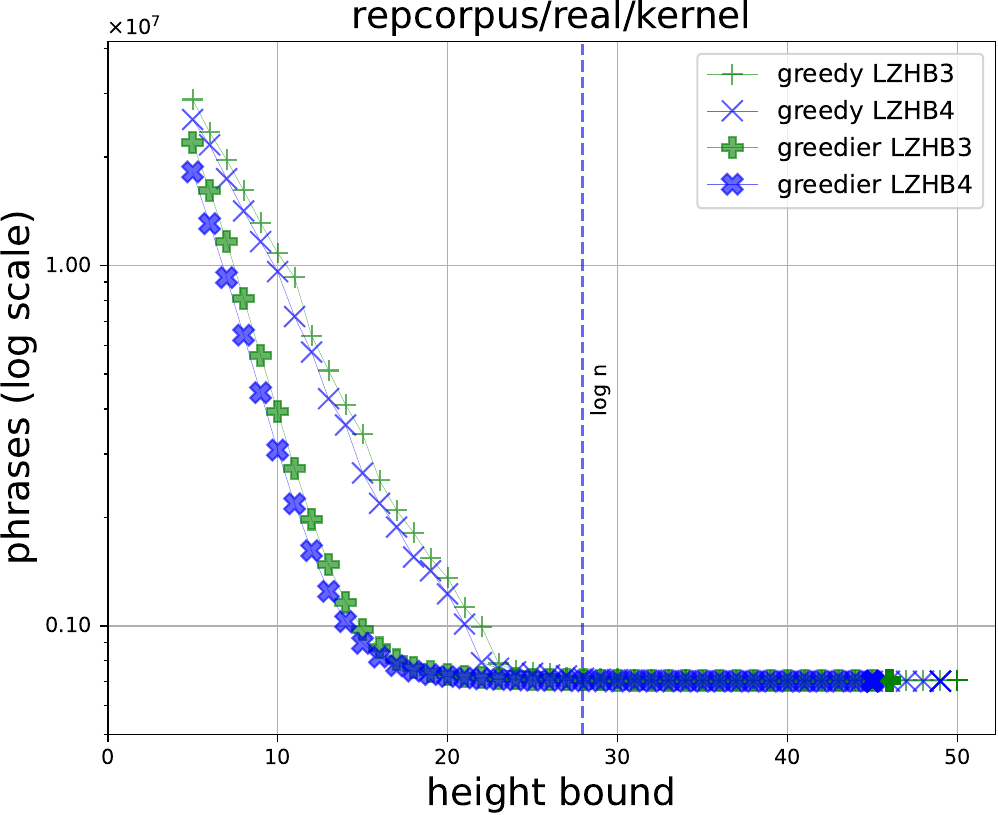}\hfill
    \includegraphics[width=0.49\textwidth]{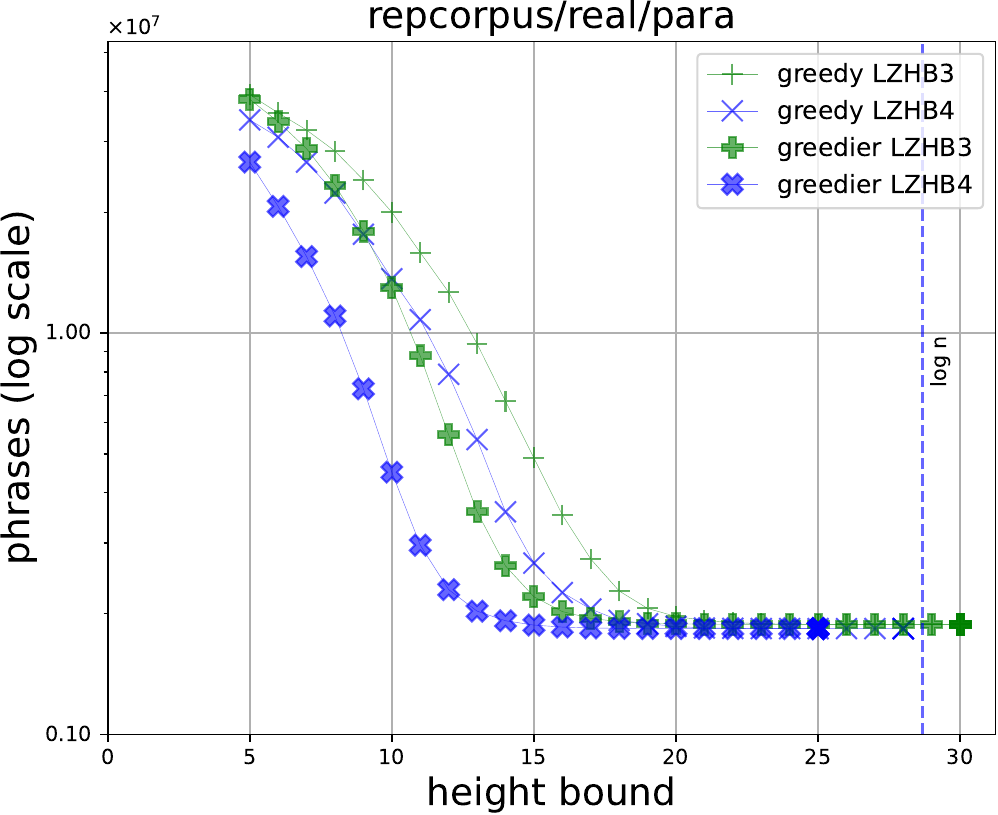}\\

    \medskip

    \includegraphics[width=0.49\textwidth]{calc_cmp_size_sa_world_leaders.pdf}
    \caption{(contd.) Number of phrases of \LZHB{3} and \LZHB{4} and their greedier variants.}
\end{figure}

\begin{table}[ht]

    \caption{The number of factors (phrases) and the maximum height of the LZ End parsing for each of our data sets.}
    \label{tab:lzend}
    \centering
    \begin{tabular}{||l r r||}
        \hline
        Dataset           & Max. Height & \# Factors \\ [0.5ex]
        \hline\hline
        Escherichia\_Coli & 27          & 2212539    \\
        cere              & 256         & 1863246    \\
        coreutils         & 174         & 1555394    \\
        einstein.de.txt   & 60          & 39587      \\
        einstein.en.txt   & 117         & 104087     \\
        influenza         & 79          & 919565     \\
        kernel            & 45          & 868362     \\
        para              & 258         & 2539381    \\
        world\_leaders    & 102         & 207269     \\
        \hline
    \end{tabular}
\end{table}

\end{document}